
\documentclass[reqno]{amsart}%
\usepackage{amsfonts}
\usepackage{amsmath}
\usepackage{amssymb}
\usepackage{graphicx}%
\setcounter{MaxMatrixCols}{30}
\graphicspath{ {celebro/},{Image with noise/},{Images without noise/},{Lena with noise/}}
\pagestyle{plain}
\newtheorem{theorem}{Theorem}
\theoremstyle{plain}

\newtheorem{definition}{Definition}

\newtheorem{lemma}{Lemma}
\newtheorem{notation}{Notation}

\newtheorem{proposition}{Proposition}
\newtheorem{remark}{Remark}

\numberwithin{equation}{section}
\begin{document}
\title[$p$-adic CNNs and Image Processing]{$p$-adic Cellular Neural Networks: Applications to Image Processing}
\author[Zambrano-Luna]{B. A. Zambrano-Luna}
\address{Centro de Investigaci\'{o}n y de Estudios Avanzados del Instituto
Polit\'{e}cnico Nacional\\
Departamento de Matem\'{a}ticas, Av. Instituto Polit\'{e}cnico Nacional \#
2508, Col. San Pedro Zacatenco, CDMX. CP 07360 \\
M\'{e}xico.}
\email{bazambrano@math.cinvestav.mx}
\author[Z\'{u}\~{n}iga-Galindo]{W. A. Z\'{u}\~{n}iga-Galindo}
\address{University of Texas Rio Grande Valley\\
School of Mathematical \& Statistical Sciences\\
One West University Blvd\\
Brownsville, TX 78520, United States }
\email{wilson.zunigagalindo@utrgv.edu}
\thanks{The second author was partially supported by the Lokenath Debnath Endowed Professorship.}
\keywords{Cellular neural networks, hierarchies, $p$-adic numbers, edge detectors, denoising.}

\begin{abstract}
The $p$-adic cellular neural networks (CNNs) are mathematical generalizations
of the neural networks introduced by Chua and Yang in the 80s. In this work we
present two new types of CNNs that can perform computations with real data,
and whose dynamics can be understood almost completely. The first type of
networks are edge detectors for grayscale images. The stationary states of
these networks are organized hierarchically in a lattice structure. The
dynamics of any of these networks consists of transitions toward some minimal
state in the lattice. The second type is a new class of reaction-diffusion
networks. We investigate the stability of these networks and show that they
can be used as filters to reduce noise, preserving the edges, in grayscale
images polluted with additive Gaussian noise. The networks introduced here
were found experimentally. They are abstract evolution equations on spaces of
real-valued functions defined in the $p$-adic unit ball for some prime number
$p$.\ In practical applications the prime $p$ is determined by the size of
image, and thus, only small primes are used. We provide several numerical
simulations\ showing how these networks\ work.

\end{abstract}
\maketitle

\section{Introduction}

In the late 80s Chua and Yang introduced a new natural computing paradigm
called the cellular neural networks (or cellular nonlinear networks) CNN which
includes the cellular automata as a particular case \cite{Chua-Yang},
\cite{Chua-Yang2}, \cite{Chua}. This paradigm has been extremely successful in
various applications in vision, robotics and remote sensing, see, e.g.,
\cite{Chua-Tamas}, \cite{Slavova} and the references therein.

In \cite{Zambrano-Zuniga} we introduce the $p$-adic cellular neural networks
which are mathematical generalizations of the classical CNNs. The new networks
have infinitely many cells which are organized hierarchically in rooted trees,
and also they have infinitely many hidden layers. Intuitively, the $p$-adic
CNNs occur as limits of large hierarchical discrete CNNs. A $p$-adic CNN is
the dynamical system given by%
\begin{equation}
\left\{
\begin{array}
[c]{l}%
\frac{\partial X(x,t)}{\partial t}=-X(x,t)+%
{\displaystyle\int\limits_{\mathbb{Q}_{p}}}
A(x,y)Y(y,t)dy+%
{\displaystyle\int\limits_{\mathbb{Q}_{p}}}
B(x,y)U(y)dy+Z(x)\\
\\
Y(x,t)=f(X(x,t)),
\end{array}
\right.  \label{Eq-1}%
\end{equation}
where $x$ is a $p$-adic number ($x\in\mathbb{Q}_{p}$), while $t$ is a
non-negative real number, $X(x,t)\in\mathbb{R}$ is the state of cell $x$ at
the time $t$, $Y(x,t)\in\mathbb{R}$ is the output of cell $x$ at the time $t$,
$f$ is a sigmoidal nonlinearity, $U$ is the input of the $\text{CNN}$, and $Z$
is the threshold of the $\text{CNN. In \cite{Zambrano-Zuniga}, }$we study the
Cauchy problem associated to (\ref{Eq-1}) and also provide numerical methods
for solving it.

The goal of this article is to show that $p$-adic CNNs can perform
computations using real data, and that the dynamics can be understood almost
completely. We present two new types of $p$-adic CNNs, one type for edge
detection of grayscale images, and the other, for denoising of grayscale
images polluted with Gaussian noise. It is important to emphasize that our
goal is not to produce new techniques for image processing, but to use these
tasks to verify that $p$-adic CNNs can perform relevant computations. On the
other hand, classical CNNs have been implemented in hardware for performing
certain image processing tasks. We have used some of the ideas introduced in
\cite{Chua-Tamas}, but our results go in a completely new direction.

We found experimentally that $p$-adic CNNs of the form
\begin{equation}
\left\{
\begin{array}
[c]{ll}%
\frac{\partial}{\partial t}X(x,t)=-X(x,t)+aY(x,t)+(B\ast U)(x)+Z(x), &
x\in\mathbb{Z}_{p},t\geq0;\\
& \\
Y(x,t)=f(X(x,t)), &
\end{array}
\right.  \label{Eq-2}%
\end{equation}
can be used as edge detectors, here $\mathbb{Z}_{p}$ is the $p$-adic unit
ball, and $U$ is an image. We develop numerical algorithms for solving the
Cauchy problem attached to (\ref{Eq-2}), with initial datum $X(x,0)=0$. The
simulations show that after a time sufficiently large the network outputs a
black-and-white image approximating the edges of the original image $U(x)$.
The performance of this edge detector is comparable to the Canny detector, and
other well-known detectors. But most importantly, we can explain, reasonably
well, how the network detects the edges of an image.

We determine all the stationary states of (\ref{Eq-2}), i.e. the solutions of
$\frac{\partial}{\partial t}X(x,t)=0$, for any $a\in\mathbb{R}$, see Lemma
\ref{Lemma-0} and Theorem \ref{Theorem1}. We show that for $a>1$, the set of
all possible stationary states $\mathcal{M}$ of (\ref{Eq-2}) has a
hierarchical structure, more precisely, $\left(  \mathcal{M},\preccurlyeq
\right)  $ is a lattice, where $\preccurlyeq$ is a partial order. Furthermore,
we determine the set of minimal elements of $\left(  \mathcal{M}%
,\preccurlyeq\right)  $, see Theorem \ref{Theorem2}. The dynamics of the
network consists of transitions in a hierarchically organized
landscape\textit{ }$\left(  \mathcal{M},\preccurlyeq\right)  $ toward some
minimal state. This is a reformulation of the classical paradigm asserting
that the dynamics of a large class of complex systems can be modeled as a
random walk on its energy landscape, see, e.g., \cite{KKZuniga}, \cite{Kozyrev
SV}.

We found experimentally that $p$-adic CNNs of the form%
\begin{gather}
\frac{\partial X(x,t)}{\partial t}=\mu X(x,t)+(\lambda I-\boldsymbol{D}%
_{0}^{\alpha})X(x,t)+%
{\displaystyle\int\limits_{\mathbb{Z}_{p}}}
A(x-y)f(X(y,t))dy\label{Eq-3}\\
+%
{\displaystyle\int\limits_{\mathbb{Z}_{p}}}
B(x-y)U(y)dy+Z(x)\nonumber
\end{gather}
can be used for denoising grayscale images polluted with Gaussian noise. In
this case, $X(x,0)$ is the input image, and $X(x,t_{0})$ is the output image,
for a suitable (typically small) $t_{0}$.

The CNN (\ref{Eq-3}) is a reaction-diffusion network. The diffusion part
corresponds to
\begin{equation}
\frac{\partial X(x,t)}{\partial t}=(\lambda I-\boldsymbol{D}_{0}^{\alpha
})X(x,t)\text{, }x\in\mathbb{Z}_{p},t\geq0, \label{Eq-4}%
\end{equation}
here $\boldsymbol{D}_{0}^{\alpha}$ is the Vladimirov operator acting on
functions supported in the unit ball, $\alpha>0$. The equation (\ref{Eq-4}) is
a $p$-adic heat equation in the unit ball, this means that there is a
stochastic Markov process attached to it. The paths of this stochastic process
are discontinuous. $p$-Adic heat equations and the associated stochastic
processes have been studied intensively in the last thirty years in connection
with models of complex systems, see, e.g., \cite{Av-4}-\cite{Av-5},
\cite{Dra-Kh-K-V}, \cite{KKZuniga}, \cite{Kochubei}-\cite{Kozyrev SV},
\cite{Torresblanca-Zuniga 1}-\cite{V-V-Z}, \cite{Zuniga-Nonlinearity}%
-\cite{Zuniga-LNM-2016}.

The reaction term in (\ref{Eq-3}) gives an estimation of the edges of the
image, while the diffusion term produces a smoothed version of the image.
Under suitable hypotheses, see Theorem \ref{Theorem3}, we show that a solution
of the initial value problem attached to (\ref{Eq-3}) is bounded at very time
if $\mu\leq0$, otherwise, the solution is bounded by $Ce^{\mu t}$, where $C$
is a positive constant. Some numerical simulations show that our filter
effectively reduces the noise while preserves the edges of the image, however,
its performance is inferior to the Perona-Malik filter, see, e.g.,
\cite{Sapiro}.

Finally, we want to mention that $p$-adic numbers have been used before in
processing image algorithms, see, e.g., \cite{Benois-Pineau et al
1}-\cite{Benois-Pineau et al 2}, \cite{KOtovich}. But these results are not
directly related with the ones presented here.

\section{ Basic facts on $p$-adic analysis}

In this section we fix the notation and collect some basic results about
$p$-adic analysis that we will use through the article. For a detailed
exposition on $p$-adic analysis the reader may consult \cite{Alberio et al},
\cite{Taibleson}, \cite{V-V-Z}. For a quick review of $p$-adic analysis the
reader may consult \cite{Bocardo-Zuniga-2}, \cite{Leon-Zuniga}.

\subsection{The field of $p$-adic numbers}

Throughout this article $p$ will denote a prime number. The field of $p-$adic
numbers $\mathbb{Q}_{p}$ is defined as the completion of the field of rational
numbers $\mathbb{Q}$ with respect to the $p-$adic norm $|\cdot|_{p}$, which is
defined as
\[
|x|_{p}=%
\begin{cases}
0 & \text{if }x=0\\
p^{-\gamma} & \text{if }x=p^{\gamma}\dfrac{a}{b},
\end{cases}
\]
where $a$ and $b$ are integers coprime with $p$. The integer $\gamma
=ord_{p}(x)$ with $ord_{p}(0):=+\infty$, is called the\textit{\ }%
$p-$\textit{adic order of} $x$. The metric space $\left(  \mathbb{Q}%
_{p},\left\vert \cdot\right\vert _{p}\right)  $ is a complete ultrametric
space. Ultrametric means that $\left\vert x+y\right\vert _{p}\leq\max\left\{
\left\vert x\right\vert _{p},\left\vert y\right\vert _{p}\right\}  $. As a
topological space $\mathbb{Q}_{p}$\ is homeomorphic to a Cantor-like subset of
the real line, see, e.g., \cite{Alberio et al}, \cite{V-V-Z}.

Any $p-$adic number $x\neq0$ has a unique expansion of the form
\begin{equation}
x=p^{ord_{p}(x)}\sum_{j=0}^{\infty}x_{j}p^{j}, \label{expansion}%
\end{equation}
where $x_{j}\in\{0,1,2,\dots,p-1\}$ and $x_{0}\neq0$. It follows from
(\ref{expansion}), that any $x\in\mathbb{Q}_{p}\smallsetminus\left\{
0\right\}  $ can be represented uniquely as $x=p^{ord_{p}(x)}u\left(
x\right)  $ and $\left\vert x\right\vert _{p}=p^{-ord_{p}(x)}$.

\subsection{Topology of $\mathbb{Q}_{p}$}

For $r\in\mathbb{Z}$, denote by $B_{r}(a)=\{x\in\mathbb{Q}_{p};\left\vert
x-a\right\vert _{p}\leq p^{r}\}$ \textit{the ball of radius }$p^{r}$
\textit{with center at} $a\in\mathbb{Q}_{p}$, and take $B_{r}(0):=B_{r}$. The
ball $B_{0}$ equals \textit{the ring of }$p-$\textit{adic integers
}$\mathbb{Z}_{p}$. We also denote by $S_{r}(a)=\{x\in\mathbb{Q}_{p};\left\vert
x-a\right\vert _{p}=p^{r}\}$ \textit{the sphere of radius }$p^{r}$
\textit{with center at} $a\in\mathbb{Q}_{p}$, and take $S_{r}(0):=S_{r}$. We
notice that $S_{0}=\mathbb{Z}_{p}^{\times}$ (the group of units of
$\mathbb{Z}_{p}$). The balls and spheres are both open and closed subsets in
$\mathbb{Q}_{p}$. In addition, two balls in $\mathbb{Q}_{p}$ are either
disjoint or one is contained in the other.

As a topological space $\left(  \mathbb{Q}_{p},\left\vert \cdot\right\vert
_{p}\right)  $ is totally disconnected, i.e. the only connected \ subsets of
$\mathbb{Q}_{p}$ are the empty set and the points. A subset of $\mathbb{Q}%
_{p}$ is compact if and only if it is closed and bounded in $\mathbb{Q}_{p}$,
see e.g. \cite[Section 1.3]{V-V-Z}, or \cite[Section 1.8]{Alberio et al}. The
balls and spheres are compact subsets. Thus $\left(  \mathbb{Q}_{p},\left\vert
\cdot\right\vert _{p}\right)  $ is a locally compact topological space.

Since $(\mathbb{Q}_{p},+)$ is a locally compact topological group, there
exists a Haar measure $dx$, which is invariant under translations, i.e.
$d(x+a)=dx$. If we normalize this measure by the condition $\int
_{\mathbb{Z}_{p}}dx=1$, then $dx$ is unique. For a quick review of the
integration in the $p$-adic framework the reader may consult
\cite{Bocardo-Zuniga-2}, \cite{Leon-Zuniga} and the references therein.

\begin{notation}
We will use $\Omega\left(  p^{-r}\left\vert x-a\right\vert _{p}\right)  $ to
denote the characteristic function of the ball $B_{r}(a)$.
\end{notation}

\subsection{The Bruhat-Schwartz space}

A real-valued function $\varphi$ defined on $\mathbb{Q}_{p}$ is \textit{called
locally constant} if for any $x\in\mathbb{Q}_{p}$ there exist an integer
$l(x)\in\mathbb{Z}$ such that%
\begin{equation}
\varphi(x+x^{\prime})=\varphi(x)\text{ for any }x^{\prime}\in B_{l(x)}.
\label{local_constancy}%
\end{equation}
A function $\varphi:\mathbb{Q}_{p}\rightarrow\mathbb{C}$ is called a
\textit{Bruhat-Schwartz function (or a test function)} if it is locally
constant with compact support. Any test function can be represented as a
linear combination, with real coefficients, of characteristic functions of
balls. The $\mathbb{R}$-vector space of Bruhat-Schwartz functions is denoted
by $\mathcal{D}(\mathbb{Q}_{p})$. For $\varphi\in\mathcal{D}(\mathbb{Q}_{p})$,
the largest number $l=l(\varphi)$ satisfying (\ref{local_constancy}) is called
\textit{the exponent of local constancy (or the parameter of constancy) of}
$\varphi$. Let $U$ be an open subset of $%
\mathbb{Q}
_{p}$, we denote by $\mathcal{D}(U)$ the $\mathbb{R}$-vector space of all test
functions with support in $U$. For instance $\mathcal{D}(\mathbb{Z}_{p})$ is
the $\mathbb{R}$-vector space of all test functions with supported in $\ $the
unit ball $\mathbb{Z}_{p}$. A function $\varphi$ in $\mathcal{D}%
(\mathbb{Z}_{p})$ can be written as%
\[
\varphi\left(  x\right)  =%
{\displaystyle\sum\limits_{j=1}^{M}}
\varphi\left(  \widetilde{x}_{j}\right)  \Omega\left(  p^{r_{j}}\left\vert
x-\widetilde{x}_{j}\right\vert _{p}\right)  ,
\]
where the $\widetilde{x}_{j}$, $j=1,\ldots,M$, are points in $\mathbb{Z}_{p}$,
the $r_{j}$, $j=1,\ldots,M$, are integers, and $\Omega\left(  p^{r_{j}%
}\left\vert x-\widetilde{x}_{j}\right\vert _{p}\right)  $ denotes the
characteristic function of the ball $B_{-r_{j}}(\widetilde{x}_{j}%
)=\widetilde{x}_{j}+p^{r_{j}}\mathbb{Z}_{p}$.

\subsection{Some function spaces}

Given $\rho\in\lbrack1,\infty)$, we denote by $L^{\rho}:=L^{\rho}\left(
\mathbb{Z}_{p}\right)  :=L^{\rho}\left(  \mathbb{Z}_{p},dx\right)  ,$ the
$\mathbb{R}-$vector space of all functions $g:\mathbb{Z}_{p}\rightarrow$
$\mathbb{R}$ satisfying%
\[
\left\Vert g\right\Vert _{\rho}=\left[
{\textstyle\int\limits_{\mathbb{Z}_{p}}}
\left\vert g\left(  x\right)  \right\vert ^{\rho}dx\right]  ^{\frac{1}{\rho}%
}<\infty.
\]
We denote by $\mathcal{C}(\mathbb{Z}_{p})$ the $\mathbb{R}$-vector space of
continuous functions $f:\mathbb{Z}_{p}\rightarrow\mathbb{R}$ satisfying
\begin{equation}
\left\Vert f\right\Vert _{\infty}:=\max_{x\in\mathbb{Z}_{p}}\text{ }\left\vert
f\left(  x\right)  \right\vert <\infty\text{.} \label{Condition_L_inf}%
\end{equation}

\section{$p$-Adic continuous CNNs}

\subsection{A type $p$-adic continuous CNNs}

In this section we present new edge detectors based on $p$-adic CNNs for
grayscale images. We take $B\in L^{1}(\mathbb{Z}_{p})$ and $U,Z\in
\mathcal{C}(\mathbb{Z}_{p})$, $a$, $b\in\mathbb{R}$ , and fix the sigmoidal
function $f(s)=\frac{1}{2}(\left\vert s+1\right\vert -|s-1|)$ for
$s\in\mathbb{R}$. In this section we consider the following $p$-adic CNN:%
\begin{equation}
\left\{
\begin{array}
[c]{ll}%
\frac{\partial}{\partial t}X(x,t)=-X(x,t)+aY(x,t)+(B\ast U)(x)+Z(x), &
x\in\mathbb{Z}_{p},t\geq0;\\
& \\
Y(x,t)=f(X(x,t)). &
\end{array}
\right.  \label{CNN_1}%
\end{equation}
We denote this $p$-adic CNN as $CNN(a,B,U,Z)$, where $a,B,U,Z$\ are the
parameters of the network. In applications to edge detection, we take $U(x)$
to be a grayscale image, and take the initial datum as $X(x,0)=0$.

\subsection{Stationary states}

We say that $X_{stat}(x)$ is a \textit{stationary state} of the network
$CNN(a,B,U,Z)$, if
\begin{equation}
\left\{
\begin{array}
[c]{ll}%
X_{stat}(x)=aY_{stat}(x)+(B\ast U)(x)+Z(x), & x\in\mathbb{Z}_{p};\\
& \\
Y_{stat}(x)=f(X_{stat}(x)). &
\end{array}
\right.  \label{Sationary state}%
\end{equation}

\begin{remark}
\label{Key-Remark}Let $\tilde{X}(x)$ be any solution of (\ref{Sationary state}%
). Then%
\begin{equation}
\widetilde{X}(x)=\left\{
\begin{array}
[c]{lcc}%
a+(B\ast U)(x)+Z(x) & \text{if} & \widetilde{X}(x)>1\\
-a+(B\ast U)(x)+Z(x) & \text{if} & \widetilde{X}(x)<-1,
\end{array}
\right.  \label{Stationary_Sol_1}%
\end{equation}
and
\begin{equation}
\left(  1-a\right)  \widetilde{X}(x)=(B\ast U)(x)+Z(x)\text{ if }\left\vert
\widetilde{X}(x)\right\vert \leq1. \label{Stationary_Sol_2}%
\end{equation}

\end{remark}

\begin{lemma}
\label{Lemma-0}(i) If $a<1$, then the network $CNN(a,B,U,Z)$ has a unique
stationary state $X_{stat}(x)\in\mathcal{C}(\mathbb{Z}_{p})$ given by%
\begin{equation}
X_{stat}(x)=\left\{
\begin{array}
[c]{lcr}%
a+(B\ast U)(x)+Z(x) & \text{if} & (B\ast U)(x)+Z(x)>1-a\\
-a+(B\ast U)(x)+Z(x) & \text{if} & (B\ast U)(x)+Z(x)<-1+a\\
\frac{(B\ast U)(x)+Z(x)}{1-a} & \text{if} & |(B\ast U)(x)+Z(x)|\leq1-a.
\end{array}
\right.  \label{Case_1}%
\end{equation}

\noindent(ii) If $a=1$ , then the network $CNN(a,B,U,Z)$ has a unique
stationary state $X_{stat}(x)\in L^{1}(\mathbb{Z}_{p})$ given by
\begin{equation}
X_{stat}(x)=\left\{
\begin{array}
[c]{lcr}%
1+(B\ast U)(x)+Z(x) & \text{if} & (B\ast U)(x)+Z(x)>0\\
-1+(B\ast U)(x)+Z(x) & \text{if} & (B\ast U)(x)+Z(x)<0\\
0 & \text{if} & (B\ast U)(x)+Z(x)=0.
\end{array}
\right.  \label{Case_2}%
\end{equation}

\end{lemma}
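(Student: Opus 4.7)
The plan is to substitute the explicit piecewise form of the sigmoid $f$ into the stationary equation (\ref{Sationary state}) and solve a pointwise algebraic equation in $x \in \mathbb{Z}_p$. Writing $g(x) := (B \ast U)(x) + Z(x)$ for brevity (note that $g \in \mathcal{C}(\mathbb{Z}_p)$, since the convolution of an $L^1$ function against a bounded continuous function on the compact group $\mathbb{Z}_p$ is continuous, and $Z \in \mathcal{C}(\mathbb{Z}_p)$), the stationary equation becomes $X_{stat}(x) = a f(X_{stat}(x)) + g(x)$, or equivalently $\Phi_a(X_{stat}(x)) = g(x)$, where $\Phi_a(t) := t - a f(t)$. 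Remark~\ref{Key-Remark} has already recorded the three branches of this equation; the task is to package them into a single pointwise solution and verify uniqueness.

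The key observation is that $\Phi_a$ is a continuous piecewise-linear function on $\mathbb{R}$ with slopes $1$, $1-a$, $1$ on the intervals $(-\infty,-1]$, $[-1,1]$, $[1,\infty)$, respectively. For part~(i), when $a < 1$ all three slopes are strictly positive, so $\Phi_a$ is a strictly increasing homeomorphism of $\mathbb{R}$ and admits a global piecewise-linear inverse. Inverting branch by branch yields the three-case formula (\ref{Case_1}) for $X_{stat}(x) = \Phi_a^{-1}(g(x))$, with the thresholds $\pm(1-a) = \Phi_a(\pm 1)$ dictating which branch applies. Uniqueness is immediate from the injectivity of $\Phi_a$. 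Continuity of $X_{stat}$ follows since each branch is affine in the continuous function $g$, and the branches match at the thresholds $g(x) = \pm(1-a)$, where both adjacent formulas evaluate to $\pm 1$.

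For part~(ii), with $a = 1$, the middle slope collapses and $\Phi_1$ is identically zero on $[-1,1]$ while retaining slope $1$ outside. Hence $g(x) > 0$ forces $X_{stat}(x) > 1$ and $X_{stat}(x) = 1 + g(x)$; $g(x) < 0$ forces $X_{stat}(x) < -1$ and $X_{stat}(x) = -1 + g(x)$. On the level set $\{g = 0\}$ any value in $[-1,1]$ solves the equation, and the formula (\ref{Case_2}) selects the canonical value $0$. The resulting function is bounded by $1 + \|g\|_\infty$ on $\mathbb{Z}_p$, which has unit Haar measure, so it lies in $L^1(\mathbb{Z}_p)$.

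The subtlest point to address is the uniqueness assertion in part~(ii): the non-injectivity of $\Phi_1$ on $[-1,1]$ means that on $\{g = 0\}$ every measurable selection in $[-1,1]$ produces a stationary state, so uniqueness holds only after fixing a canonical representative (the zero selection in (\ref{Case_2})), or up to modification on this level set. In part~(i) the analogous issue does not arise precisely because $a < 1$ keeps $\Phi_a$ strictly monotone, which is what makes the continuous stationary state genuinely unique and, via the continuity of $g$, continuous on $\mathbb{Z}_p$.
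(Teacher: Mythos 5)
Your proof is correct, and in substance it follows the same strategy as the paper: reduce (\ref{Sationary state}) to a scalar equation at each point $x$ and solve it using the explicit form of $f$. The packaging, however, is genuinely different and arguably better. The paper starts from Remark \ref{Key-Remark}, checks that (\ref{Case_1}) solves the equation, and then proves uniqueness by case analysis on the value of a competing \emph{continuous} stationary state at a point $x_{0}$. Your device of the map $\Phi_{a}(t)=t-af(t)$ --- a continuous, strictly increasing, piecewise-linear homeomorphism of $\mathbb{R}$ when $a<1$, since all three slopes $1$, $1-a$, $1$ are positive --- collapses existence, uniqueness, and continuity into the single identity $X_{stat}=\Phi_{a}^{-1}\circ g$. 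In particular it gives \emph{pointwise} uniqueness among all functions, not merely among continuous ones, which is strictly stronger than what the paper's argument establishes.

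Your closing observation about part (ii) is a genuine catch rather than a quibble. For $a=1$ the equation $\Phi_{1}(X(x))=g(x)$ admits, at every point of $\{g=0\}$, any value in $[-1,1]$; taking $B=U=Z=0$ gives $g\equiv0$, and then every function $\mathbb{Z}_{p}\rightarrow[-1,1]$ is a stationary state, so the uniqueness asserted alongside (\ref{Case_2}) fails as stated --- even in $L^{1}(\mathbb{Z}_{p})$, since $\{g=0\}$ may have positive measure. The paper's proof is silent on this point: it only remarks that the case $a=1$ ``follows from (\ref{Stationary_Sol_2})'' and that the resulting function is bounded, without addressing uniqueness at all. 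Your formulation --- uniqueness up to the choice of a selection on $\{g=0\}$, with $0$ as the canonical representative --- is the correct version of the statement, and it would be worth recording it explicitly.
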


\begin{proof}
If $a<1$, it follows from (\ref{Stationary_Sol_1})-(\ref{Stationary_Sol_2})
that (\ref{Case_1}) is a continuous stationary state since by the dominated
convergence theorem $(B\ast U)(x)$ is continuous. To establish the uniqueness
of the solution, let $X(x)\in\mathcal{C}(\mathbb{Z}_{p})$ be another
stationary state. Consider a point $x_{0}\in\mathbb{Z}_{p}$ such that
$X(x_{0})>1$. Then by (\ref{Stationary_Sol_1}), $X(x_{0})=a+(B\ast
U)(x_{0})+Z(x_{0})>1$ consequently $(B\ast U)(x_{0})+Z(x_{0})>1-a$ and
therefore
\[
X(x_{0})=a+(B\ast U)(x_{0})+Z(x_{0}):=X_{stat}(x_{0}).
\]
The cases $X(x_{0})<-1$ and $X\left\vert (x_{0})\right\vert <1$ are treated in
a similar way.

The case $a=1$ follows from (\ref{Stationary_Sol_2}), in this case we have
that $X_{stat}(x)\in L^{1}(\mathbb{Z}_{p})$ since $X_{stat}(x)$\ is bounded.
The continuity of $X_{stat}(x)$ requires further hypotheses on $B,U,Z$.
\end{proof}

\begin{definition}
\label{Definition1}Assume that $a>1$. Given
\[
I_{+}\subseteq\{x\in\mathbb{Z}_{p};\;1-a<(B\ast U)(x)+Z(x)\},
\]%
\[
I_{-}\subseteq\{x\in\mathbb{Z}_{p};\;(B\ast U)(x)+Z(x)<a-1\},
\]
satisfying $I_{+}\cap I_{-}=\varnothing$ \ and
\[
\mathbb{Z}_{p}\smallsetminus\left(  I_{+}\cup I_{-}\right)  \subseteq
\{x\in\mathbb{Z}_{p};\;1-a<(B\ast U)(x)+Z(x)<a-1\},
\]
we define the function%
\begin{equation}
X_{stat}(x;I_{+},I_{-})=\left\{
\begin{array}
[c]{lll}%
a+(B\ast U)(x)+Z(x) & \text{if} & x\in I_{+}\\
-a+(B\ast U)(x)+Z(x) & \text{if} & x\in I_{-}\\
\frac{(B\ast U)(x)+Z(x)}{1-a} & \text{if} & x\in\mathbb{Z}_{p}\setminus\left(
I_{+}\cup I_{-}\right)  .
\end{array}
\right.  \label{Stationary_Sol_3}%
\end{equation}

\end{definition}

\begin{theorem}
\label{Theorem1}Assume that $a>1$. All functions of type
(\ref{Stationary_Sol_3}) are stationary states of the network $CNN(a,B,U,Z)$.
Conversely, any \ stationary state of the network $CNN(a,B,U,Z)$ has the form
(\ref{Stationary_Sol_3}).
\end{theorem}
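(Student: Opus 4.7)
The plan is to prove both implications by case analysis on the three branches of (\ref{Stationary_Sol_3}), in each case leveraging Remark~\ref{Key-Remark}.

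For the direct implication, I fix admissible $I_+, I_-$ as in Definition~\ref{Definition1} and set $X_{stat}(x) := X_{stat}(x; I_+, I_-)$. I verify the pair of equations (\ref{Sationary state}) branch by branch. On $I_+$, the inclusion $I_+ \subseteq \{(B\ast U) + Z > 1-a\}$ combined with $a > 1$ gives $X_{stat}(x) = a + (B\ast U)(x) + Z(x) > 1$, hence $f(X_{stat}(x)) = 1$, and the stationary equation collapses to a tautology. On $I_-$ the symmetric argument yields $X_{stat}(x) < -1$ and $f(X_{stat}(x)) = -1$. On the complement, the strict bound $1-a < (B\ast U)(x) + Z(x) < a-1$, divided by the negative number $1-a$ with the inequalities flipping, produces $|X_{stat}(x)| < 1$; so $f$ acts as the identity there, and the stationary equation reduces to $(1-a) X_{stat}(x) = (B\ast U)(x) + Z(x)$, which is the defining formula.

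For the converse, let $\widetilde{X}$ be any stationary state and let
\[
I_+ := \{x \in \mathbb{Z}_p : \widetilde{X}(x) > 1\}, \qquad I_- := \{x \in \mathbb{Z}_p : \widetilde{X}(x) < -1\}.
\]
These sets are automatically disjoint. Remark~\ref{Key-Remark} pins down $\widetilde{X}$ on each piece of the induced partition: on $I_+$, the first line of (\ref{Stationary_Sol_1}) gives $\widetilde{X}(x) = a + (B\ast U)(x) + Z(x) > 1$, which forces $(B\ast U)(x) + Z(x) > 1-a$ and matches the first branch of (\ref{Stationary_Sol_3}); the case $I_-$ is symmetric; on $\mathbb{Z}_p \setminus (I_+ \cup I_-)$ one has $|\widetilde{X}(x)| \leq 1$, and (\ref{Stationary_Sol_2}) yields $\widetilde{X}(x) = \frac{(B\ast U)(x) + Z(x)}{1-a}$, matching the third branch, while $|\widetilde{X}(x)| \leq 1$ translates, through $1-a < 0$, into $|(B\ast U)(x) + Z(x)| \leq a-1$, confirming the subset inclusion required in Definition~\ref{Definition1}.

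The step I expect to require the most care is the bookkeeping at boundary points where $|\widetilde{X}(x)| = 1$, equivalently $|(B\ast U)(x) + Z(x)| = a - 1$. At such points the corner of the sigmoid is hit and the affine branches of (\ref{Stationary_Sol_3}) coincide numerically with the linear branch; hence any consistent assignment of such points to $I_+$, $I_-$, or the complement yields the same function $X_{stat}$. The genuine content of the theorem is precisely this flexibility: the plateau $f(s) = \pm 1$ for $|s| \geq 1$ is what decouples the three branches and produces the whole family of stationary states parametrized by the admissible pairs $(I_+, I_-)$.
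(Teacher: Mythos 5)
Your argument is, in substance, the paper's own proof: the forward implication is the same three-case verification on $I_{+}$, $I_{-}$ and their complement, and the converse uses exactly the same choice $I_{+}=\widetilde{X}^{-1}((1,\infty))$, $I_{-}=\widetilde{X}^{-1}((-\infty,-1))$ together with Remark \ref{Key-Remark}. The only place you depart from the paper is the closing paragraph on boundary points, and there your claim is wrong: at a point $x_{0}$ with $(B\ast U)(x_{0})+Z(x_{0})=a-1$ the three branches of (\ref{Stationary_Sol_3}) do \emph{not} all coincide --- the $I_{+}$ branch gives $2a-1$, while the $I_{-}$ branch and the linear branch both give $-1$ --- so it is false that any consistent assignment of such points yields the same function. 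The real issue at these points is one you correctly sensed but did not resolve: the converse only delivers the non-strict inclusion $\mathbb{Z}_{p}\setminus(I_{+}\cup I_{-})\subseteq\{x:\;|(B\ast U)(x)+Z(x)|\leq a-1\}$, whereas Definition \ref{Definition1} demands the open strip; indeed, a stationary state taking the value $-1$ at such an $x_{0}$ cannot be placed in $I_{-}$ or in the complement under the strict inequalities of Definition \ref{Definition1}, so it is not literally of the form (\ref{Stationary_Sol_3}) without relaxing those inequalities. The paper's proof has the same gap (it silently writes the closed interval $[1-a,a-1]$ at this step), so your proof is on equal footing with the original; but your proposed patch for the boundary set should be deleted or replaced by the observation that the inequalities in Definition \ref{Definition1} must be taken non-strict on the appropriate sides.
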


\begin{proof}
We first verify that any function of type (\ref{Stationary_Sol_3}) is
a\ stationary state. Take a point $x_{0}\in\mathbb{Z}_{p}$. Since the sets
$I_{+}$, $I_{-}$, $\mathbb{Z}_{p}\smallsetminus\left(  I_{+}\cup I_{-}\right)
$ are disjoint, three cases occur.

\textbf{Case 1}: $x_{0}\in I_{+}$.

If $x_{0}\in I_{+}$, then $X_{stat}(x_{0};I_{+},I_{-})=a+(B\ast U)(x_{0}%
)+Z(x_{0})$ and by definition of $I_{+}$, $a+(B\ast U)(x_{0})+Z(x_{0})>1$.
Then
\begin{multline*}
af(X_{stat}(x_{0};I_{+},I_{-}))+(B\ast U)(x_{0})+Z(x_{0})=\\
a+(B\ast U)(x_{0})+Z(x_{0})=X_{stat}(x_{0};I_{+},I_{-}).
\end{multline*}

\textbf{Case 2}: $x_{0}\in I_{-}$.

If $x_{0}\in I_{-}$, then $X_{stat}(x_{0};I_{+},I_{-})=-a+(B\ast
U)(x_{0})+Z(x_{0})$ and by definition of $I_{-}$, $-a+(B\ast U)(x_{0}%
)+Z(x_{0})<-1$. Then
\begin{multline*}
af(X_{stat}(x_{0};I_{+},I_{-}))+(B\ast U)(x_{0})+Z(x_{0})=\\
-a+(B\ast U)(x_{0})+Z(x_{0})=X_{stat}(x_{0};I_{+},I_{-}).
\end{multline*}

\textbf{Case 3}: $x_{0}\in\mathbb{Z}_{p}\smallsetminus\left(  I_{+}\cup
I_{-}\right)  $.

If $x_{0}\notin I_{+}\sqcup I_{-}$, then $X_{stat}(x_{0};I_{+},I_{-}%
)=\frac{(B\ast U)(x_{0})+Z(x_{0})}{1-a}$ and by definition of $\mathbb{Z}%
_{p}\smallsetminus\left(  I_{+}\cup I_{-}\right)  $, $-1\leq\frac{(B\ast
U)(x_{0})+Z(x_{0})}{1-a}\leq1$, then%
\begin{multline*}
af(X_{stat}(x_{0};I_{+},I_{-}))+(B\ast U)(x_{0})+Z(x_{0})=\\
a\frac{(B\ast U)(x_{0})+Z(x_{0})}{1-a}+(B\ast U)(x_{0})+Z(x_{0})=\frac{(B\ast
U)(x_{0})+Z(x_{0})}{1-a}\\
=X_{stat}(x_{0};I_{+},I_{-}).
\end{multline*}

Therefore $X_{stat}(x_{0};I_{+},I_{-})$ is a stationary state of the network
$CNN(a,B,U,Z)$.

Now, suppose that $X_{stat}(x)$ is a stationary state of network
$CNN(a,B,U,Z)$. Set
\[
I_{+}:=X_{stat}^{-1}((1,\infty))\text{, \ \ \ \ }I_{-}:=X_{stat}^{-1}%
((-\infty,-1)).
\]
By using (\ref{Stationary_Sol_1}) and (\ref{Stationary_Sol_2}), we have
\[
I_{+}\subseteq(B\ast U+Z)^{-1}((1-a,\infty)),
\]%
\[
I_{-}\subseteq(B\ast U+Z)^{-1}((-\infty,a-1)),
\]
and
\[
\mathbb{Z}_{p}\smallsetminus\left(  I_{+}\cup I_{-}\right)  \subseteq(B\ast
U+Z)^{-1}([1-a,a-1]).
\]
Then $X_{stat}(x;I_{+},I_{-})$ is a well-defined function. Finally, using
again (\ref{Stationary_Sol_1}) and (\ref{Stationary_Sol_2}), we conclude that
$X_{stat}(x)=X_{stat}(x;I_{+},I_{-})$ for all $x\in\mathbb{Z}_{p}$.
\end{proof}

\begin{remark}
Notice that%
\[
Y_{stat}(x;I_{+},I_{-}):=f\left(  X_{stat}(x;I_{+},I_{-})\right)  =\left\{
\begin{array}
[c]{lll}%
1 & \text{if} & x\in I_{+}\\
-1 & \text{if} & x\in I_{-}\\
\frac{(B\ast U)(x)+Z(x)}{1-a} & \text{if} & x\in\mathbb{Z}_{p}\setminus\left(
I_{+}\cup I_{-}\right)  .
\end{array}
\right.
\]
The function $Y_{stat}(x;I_{+},I_{-})$ is the output of the network. If
$I_{+}\cup I_{-}=\mathbb{Z}_{p}$, we say that $X_{stat}(x;I_{+},I_{-})$ is
bistable. The set $\mathcal{B}\left(  I_{+},I_{-}\right)  =\mathbb{Z}%
_{p}\setminus\left(  I_{+}\cup I_{-}\right)  $ measures how far $X_{stat}%
(x;I_{+},I_{-})$ is from being bistable. We call set $\mathcal{B}\left(
I_{+},I_{-}\right)  $ the set of bistability of $X_{stat}(x;I_{+},I_{-})$. If
$\mathcal{B}\left(  I_{+},I_{-}\right)  =\varnothing$, then $X_{stat}%
(x;I_{+},I_{-})$ is bistable.
\end{remark}

\begin{remark}
If $I_{+}\cup I_{-}\subsetneqq\mathbb{Z}_{p}$, we say that $X_{stat}%
(x;I_{+},I_{-})$ is an unstable.
\end{remark}

\section{Hierarchical structure of the space of stationary states}

A relation $\preccurlyeq$ is \textit{a partial order} on a set $S$ if it
satisfies: 1 (reflexivity) $f\preccurlyeq f$ for all $f$ in $S$; 2
(antisymmetry) $f\preccurlyeq g$ and $g\preccurlyeq f$ implies $f=g$; 3
(transitivity) $f\preccurlyeq g$ and $g\preccurlyeq h$ implies $f\preccurlyeq
h$. \ A \textit{partially ordered set} $\left(  S,\preccurlyeq\right)  $ (or
poset) is a set endowed with a partial order. A partially ordered set $\left(
S,\preccurlyeq\right)  $ is called a \textit{lattice} if for every $f$, $g$ in
$S$, the elements $f\wedge g=\inf\{f,g\}$ and $f\vee$ $g=\sup\{f,g\}$ exist.
Here, $f\wedge g$ denotes the smallest element in $S$ satisfying $f\wedge
g\preccurlyeq f$ and $f\wedge g\preccurlyeq g$; while $f\vee$ $g$ \ denotes
the largest element in $S$ satisfying $f\preccurlyeq$ $f\vee$ $g$ and
$g\preccurlyeq f\vee$ $g$. We say that $h\in S$ a \textit{minimal} element of
with respect to $\preccurlyeq$, if there is no element $f\in S$, $f\neq h$
such that $f\preccurlyeq h$.

Posets offer a natural way to formalize the notion of hierarchy.

We set
\[
\mathcal{M}=\bigcup_{I_{+},I_{-}}\left\{  X_{stat}(x;I_{+},I_{-})\right\}  ,
\]
where $I_{+},I_{-}$ run trough all the sets given in Definition
\ref{Definition1}. \ Given $X_{stat}(x;I_{+},I_{-})$ and $X_{stat}%
(x;I_{+}^{\prime},I_{-}^{\prime})$ in $\mathcal{M}$, with $I_{+}\cup I_{-}%
\neq\mathbb{Z}_{p}$ or $I_{+}^{\prime}\cup I_{-}^{\prime}\neq\mathbb{Z}_{p}$,
we define%
\begin{equation}
X_{stat}(x;I_{+}^{\prime},I_{-}^{\prime})\preccurlyeq X_{stat}(x;I_{+}%
,I_{-})\text{ if }I_{+}\cup I_{-}\subseteq I_{+}^{\prime}\cup I_{-}^{\prime}.
\label{Definitioon_Order}%
\end{equation}
In the case $I_{+}\cup I_{-}=\mathbb{Z}_{p}$ and $I_{+}^{\prime}\cup
I_{-}^{\prime}=\mathbb{Z}_{p}$, the corresponding stationary states
$X_{stat}(x;I_{+},I_{-})$, $X_{stat}(x;I_{+},I_{-})$ are not comparable. Since
the condition $I_{+}\cup I_{-}\subseteq I_{+}^{\prime}\cup I_{-}^{\prime}\ $is
equivalent to $\mathcal{B}\left(  I_{+}^{\prime},I_{-}^{\prime}\right)
=\mathbb{Z}_{p}\setminus(I_{1}^{\prime}\cup I_{-1}^{\prime})\subseteq
\mathcal{B}\left(  I_{1},I_{-1}\right)  \mathcal{=}\mathbb{Z}_{p}%
\setminus(I_{1}\sqcup I_{-1})$, condition (\ref{Definitioon_Order}) means that
the set of bistability of $X_{stat}(x;I_{+}^{\prime},I_{-}^{\prime})$ is
smaller that the set of of bistability of $X_{stat}(x;I_{+},I_{-})$. Also, the
condition $I_{+}\cup I_{-}\subseteq I_{+}^{\prime}\cup I_{-}^{\prime}%
$implies\ that
\[
X_{stat}(x;I_{+}^{\prime},I_{-}^{\prime})(x)=X_{stat}(x;I_{+},I_{-})(x)\text{
for all }x\in I_{+}\cup I_{-}\cup\mathcal{B}\left(  I_{+}^{\prime}\cup
I_{-}^{\prime}\right)  .
\]
By using this observation, one verifies that (\ref{Definitioon_Order}) defines
a partial order in $\mathcal{M}$. This means that the set of stationary states
of the network $CNN(a,B,U,Z)$, $a>1$, has a hierarchical structure, where the
bistable stationary states are the minimal ones. Intuitively, the bistable
stationary states are at the deepest level of $\mathcal{M}$. Furthermore,
$\left(  \mathcal{M},\preccurlyeq\right)  $ is a lattice. Indeed, given
$X_{stat}(x;I_{+}^{\prime},I_{-}^{\prime})$, $X_{stat}(x;I_{+},I_{-})$,in
$\mathcal{M}$, it verifies that
\[
X_{stat}(x;I_{+}^{\prime},I_{-}^{\prime})\wedge X_{stat}(x;I_{+}%
,I_{-})=X_{stat}(x;I_{+}^{\prime\prime},I_{-}^{\prime\prime})\text{, }%
\]
where $I_{+}^{\prime\prime}=I_{+}\cup I_{+}^{\prime}$, $I_{-}^{\prime\prime
}=I_{-}\cup I_{-}^{\prime}$, and
\[
X_{stat}(x;I_{+}^{\prime},I_{-}^{\prime})\vee X_{stat}(x;I_{+},I_{-}%
)=X_{stat}(x;I_{+}^{\prime\prime\prime},I_{-}^{\prime\prime\prime}),
\]
where $I_{+}^{\prime\prime\prime}=I_{+}\cap I_{+}^{\prime}$, $I_{-}%
^{\prime\prime\prime}=I_{-}\cap I_{-}^{\prime}$. Therefore, we have
established the following result:

\begin{theorem}
\label{Theorem2}$\left(  \mathcal{M},\preccurlyeq\right)  $ is a lattice.
Furthermore, the set of minimal elements of $\left(  \mathcal{M}%
,\preccurlyeq\right)  $ agrees with the set of bistable states of
$CNN(a,B,U,Z)$.
\end{theorem}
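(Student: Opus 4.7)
The plan is to derive the lattice structure from set-theoretic operations on the index pairs $(I_+, I_-)$ parametrizing $\mathcal{M}$, and then read off the minimal elements directly from the defining relation.

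First, I would dispose of the partial order axioms for $\preccurlyeq$. Reflexivity is immediate and transitivity follows from transitivity of set inclusion. Antisymmetry is the subtle one: from $I_+ \cup I_- = I_+' \cup I_-'$ one must conclude $X_{stat}(\cdot;I_+,I_-) = X_{stat}(\cdot;I_+',I_-')$ as functions, and I would do this by invoking the agreement observation recorded in the excerpt, namely that comparable stationary states coincide on $I_+ \cup I_- \cup \mathcal{B}(I_+' \cup I_-')$; under set equality of the index unions, this region is all of $\mathbb{Z}_p$, forcing the functions to agree.

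Second, for the lattice operations I take as candidates $X \wedge X' = X_{stat}(\cdot; I_+ \cup I_+', I_- \cup I_-')$ and $X \vee X' = X_{stat}(\cdot; I_+ \cap I_+', I_- \cap I_-')$. Verifying these lie in $\mathcal{M}$ reduces to the three conditions of Definition~\ref{Definition1}: the containments of the new $I_+$ in $\{(B\ast U)+Z > 1-a\}$ and of the new $I_-$ in $\{(B\ast U)+Z < a-1\}$ are immediate from monotonicity, while mutual disjointness is the delicate step, since $(I_+ \cup I_+') \cap (I_- \cup I_-') = \varnothing$ demands the cross-conditions $I_+ \cap I_-' = \varnothing = I_+' \cap I_-$ that Definition~\ref{Definition1} does not give on its own; I would extract these from the joint structure enforced on comparable elements of $\mathcal{M}$. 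The universal properties then follow from set-theoretic monotonicity: the meet lies below both inputs since $I_\pm \cup I_\pm' \supseteq I_\pm$, and any common lower bound has an index-set union containing both $I_+ \cup I_-$ and $I_+' \cup I_-'$, hence their union; the join argument is dual.

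Third, for the minimal elements I argue both inclusions. If $X_{stat}(\cdot;I_+,I_-)$ is bistable, i.e. $I_+ \cup I_- = \mathbb{Z}_p$, then any $X' \preccurlyeq X$ has $I_+' \cup I_-' = \mathbb{Z}_p$ and is itself bistable; combined with the incomparability of distinct bistable states noted in the excerpt, this forces $X' = X$, proving minimality. Conversely, if $X$ is not bistable then $\mathcal{B}(I_+,I_-) \neq \varnothing$, and every point in it satisfies $1-a < (B\ast U)+Z < a-1$ by Definition~\ref{Definition1}, so may be reassigned to either $I_+$ or $I_-$ to produce a strictly smaller element of $\mathcal{M}$, witnessing non-minimality. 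The main obstacle, as flagged in step two, is reconciling the cross-disjointness in the meet with the order relation; this is the only step in the argument that is not purely formal.
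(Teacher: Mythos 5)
You follow the same route as the paper: the identical candidate meet and join (componentwise unions and intersections of the index sets), the same reliance on the agreement observation for antisymmetry, and the same reading of minimality off the bistability set. Your converse direction for minimality (reassigning a point of $\mathcal{B}(I_+,I_-)$ to exhibit a strictly smaller state) is sound and is in fact more explicit than anything the paper writes down.

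The step you flag as delicate, however, is a genuine gap, and your proposed repair does not close it. You want to extract the cross-disjointness $I_+\cap I_-'=\varnothing=I_+'\cap I_-$ from ``the joint structure enforced on comparable elements,'' but a lattice requires a meet for \emph{every} pair, including incomparable ones, where no such structure is available. Concretely, write $P$, $N$, $M$ for the sets where $(B\ast U)+Z$ is $\geq a-1$, $\leq 1-a$, and strictly in between; every admissible pair of Definition~\ref{Definition1} has the form $I_+=P\sqcup M_+$, $I_-=N\sqcup M_-$ with $M_+,M_-$ disjoint subsets of $M$. If $x_0\in M$, then $(P\cup\{x_0\},N)$ and $(P,N\cup\{x_0\})$ are both admissible, and the candidate meet $(P\cup\{x_0\},N\cup\{x_0\})$ violates $I_+\cap I_-=\varnothing$, so it is not an element of $\mathcal{M}$. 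The same example undercuts your first step: the two states have equal index unions yet differ at $x_0$ (values $a+w$ versus $-a+w$, with $w=(B\ast U)(x_0)+Z(x_0)$), so the agreement observation you invoke for antisymmetry fails as stated --- the order of the paper only sees $I_+\cup I_-$ and forgets the partition into $I_+$ and $I_-$. You have not overlooked something the paper supplies: the paper's own discussion asserts the partial-order property and the meet/join formulas without addressing either point. To make the argument work one should order $\mathcal{M}$ by componentwise inclusion ($I_+\subseteq I_+'$ and $I_-\subseteq I_-'$) rather than by inclusion of the unions; the cross-terms are then automatically controlled and the rest of your outline goes through.
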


\section{Edge detection}

\subsection{A new class of edge detectors}

We take $a>1$, $X(x,0)=0$, and $U(x)\in\mathcal{D}(\mathbb{Z}_{p})$ to be a
grayscale image. We argue that network (\ref{CNN_1}) works as an edge
detector. By Theorem \ref{Theorem1}, network $CNN(a,B,U,Z)$, $a>1$ has steady
states of the form
\begin{equation}
Y_{stat}(x)=f(X_{stat}\left(  x\right)  )=\left\{
\begin{array}
[c]{lll}%
+1 & \text{if} & (B\ast U)(x)+Z(x)>\text{Threshold}_{1}\\
-1 & \text{if} & (B\ast U)(x)+Z(x)<\text{Threshold}_{2},
\end{array}
\right.  \label{Steady State}%
\end{equation}
where Threshold$_{2}$, Threshold$_{1}$ are real numbers. This type of outputs
occur for networks with stationary states where $I_{+}\sqcup I_{-}%
=\mathbb{Z}_{p}$. For instance, when $I_{+}\subseteq(B\ast U+Z)^{-1}%
((Threshold_{1},\infty))$ and $I_{-}\subseteq(B\ast U+Z)^{-1}((-\infty
,Threshold_{2}))$. If $U(x)$ is sufficiently small, then $(B\ast U)(x)+Z(x)$
gives a measure of dispersion of the image intensities; if this value is
larger than Threshold$_{1}$, the networks outputs $+1$ to indicate the
existence of an edge, if value is smaller than Threshold$_{2}$, the network
outputs $-1$ to indicate the nonexistence of an edge.

We conducted several numerical experiments with grayscale images. We
implemented a numerical method for solving the initial value problem attached
to network $CNN(a,B,U,Z)$, with $X(x,0)=0$ and $U(x)$ a grayscale image. The
simulations show that after a sufficiently large time the network outputs a
black-and-white image approximating the edges of the original image $U(x)$.
This means that for $t$ sufficiently large $X(x,t)$ is close to a bistable
stationary state $X_{stat}(x;I_{+},I_{-})$. Furthermore, after a certain
sufficiently large time, the output of the network do not show a difference
perceivable by the human eye. We interpret this result as the bistable
stationary states are asymptotically stable; of course this is a mathematical conjecture.

We now give an intuitive picture of the dynamics of the network, for $t$
sufficiently large, using $\left(  \mathcal{M},\preccurlyeq\right)  $ as an
\textit{asymptotic} \textit{landscape for} $CNN(a,B,U,Z)$. For $t$
sufficiently large, the network performs transitions between stationary states
$X_{stat}(x;I_{+},I_{-})$ belonging to a small neighborhood $\mathcal{N}$
around a bistable state $X_{stat}^{\left(  0\right)  }(x;I_{+},I_{-})$, with
$I_{+}\cup I_{-}=\mathbb{Z}_{p}$. The dynamics of the network consists of
transitions in a hierarchically organized landscape\textit{ }$\left(
\mathcal{M},\preccurlyeq\right)  $ toward some minimal state. This is a
reformulation of the classical paradigm asserting that the dynamics of a large
class of complex systems can be modeled as a random walk on its energy landscape.

\subsection{Discretization}

To process an image $U(x)$, we use a discrete version of network
$CNN(a,B,U,Z)$, $a>1$. In turn, this requires to determine suitable kernels
$B(x)$. We address these matters on this section.

We take $L$ \ to be a positive integer, and set $G_{L}=\mathbb{Z}_{p}%
/p^{L}\mathbb{Z}_{p}$. We identify $i\in G_{L}$ with an element of the form%
\[
i=i_{0}+i_{1}p+\ldots+i_{L-1}p^{L-1},
\]
where the $i_{k}$s belong to the set $\left\{  0,1,\ldots,p-1\right\}  $. We
denote by $\mathcal{D}_{L}\left(  \mathbb{Z}_{p}\right)  $ the $\mathbb{R}%
$-vector space of test functions of the form%
\[
\varphi\left(  x\right)  =%
{\displaystyle\sum\limits_{i\in G_{L}}}
\varphi\left(  i\right)  \Omega\left(  p^{L}\left\vert x-i\right\vert
_{p}\right)
\]
supported in the unit ball $\mathbb{Z}_{p}$. Since $\Omega\left(
p^{L}\left\vert x-i\right\vert _{p}\right)  \Omega\left(  p^{L}\left\vert
x-j\right\vert _{p}\right)  =0$ for $i\neq j$, the set
\[
\left\{  \Omega\left(  p^{L}\left\vert x-i\right\vert _{p}\right)  \right\}
_{i\in G_{L}}%
\]
is a basis of $\mathcal{D}_{L}\left(  \mathbb{Z}_{p}\right)  $. Notice that
the dimension of $\mathcal{D}_{L}\left(  \mathbb{Z}_{p}\right)  $\ is $p^{L}$.

Assuming that $B\left(  x\right)  ,U(x),Z(x)\in\mathcal{D}_{L}\left(
\mathbb{Z}_{p}\right)  $, the initial value problem%
\begin{equation}
\left\{
\begin{array}
[c]{ll}%
\frac{\partial}{\partial t}X(x,t)=-X(x,t)+aY(x,t)+(B\ast U)(x)+Z(x), &
x\in\mathbb{Z}_{p},t\geq0;\\
& \\
X(x,0)=X_{0}\in\mathcal{D}_{L}\left(  \mathbb{Z}_{p}\right)  . &
\end{array}
\right.  \label{Cauchy_1}%
\end{equation}
has unique solution
\begin{equation}
X(x,t)=\sum_{i\in G_{L}}X(i,t)\Omega\left(  p^{L}\left\vert x-i\right\vert
_{p}\right)  \label{Definition_A}%
\end{equation}
in $\mathcal{D}_{L}\left(  \mathbb{Z}_{p}\right)  $ for $t\geq0$, see
\cite[Theorem 1]{Zambrano-Zuniga}.

This result allow us to obtain a discretization of (\ref{Cauchy_1}) and
(\ref{CNN_1}) as follows. Take%
\begin{equation}
U(x)=\sum_{i\in G_{L}}U(i)\Omega(p^{L}|x-i|), \label{Definition_AA}%
\end{equation}%
\begin{equation}
Z(x)=\sum_{i\in G_{L}}Z(i)\Omega(p^{L}|x-i|), \label{Definition_BB}%
\end{equation}
and%
\begin{equation}
B(x)=p^{M_{2}-M_{1}}\Omega(p^{M_{2}}|x|_{p})-\Omega(p^{M_{1}}|x|_{p})
\label{Definition_B}%
\end{equation}
for some integers $M_{1}\leq M_{2}\leq L$. \ 

We now take $i$, $j\in G_{L}$ and an integer $M\leq L$, then%
\[
|i-j+p^{L}z|_{p}=|i-j|_{p}\text{ for any }z\in\mathbb{Z}_{p}.
\]
By using this observation, one gets that%
\begin{gather}
\Omega(p^{M}|x|_{p})\ast U(x)=\label{Observation}\\
=\sum_{i\in G_{L}}\left\{  \sum_{j\in G_{L}}U(j)%
{\displaystyle\int\limits_{\mathbb{Z}_{p}}}
\Omega(p^{M}|i-y|_{p})\Omega(p^{L}|y-j|_{p})dy\right\}  \Omega(p^{L}%
|x-i|_{p})\nonumber\\
=\sum_{i\in G_{L}}\left\{  \sum_{j\in G_{L}}U(j)%
{\displaystyle\int\limits_{j+p^{L}\mathbb{Z}_{p}}}
\Omega(p^{M}|i-y|_{p})dy\right\}  \Omega(p^{L}|x-i|_{p})\nonumber\\
=\sum_{i\in G_{L}}\left\{  p^{-L}\sum_{j\in G_{L}}U(j)\Omega(p^{M}%
|i-j|_{p})\right\}  \Omega(p^{L}|x-i|_{p}).\nonumber
\end{gather}
Now, from (\ref{Definition_B})-(\ref{Observation}), we get the following
formula:%
\begin{gather}
(B\ast U)(x)=\label{Definition_C}\\
\sum_{i\in G_{l}}p^{-L}\left(  p^{M_{2}-M_{1}}\sum_{j\in G_{l}}\Omega
(p^{M_{2}}|i-j|_{p})U(j)-\sum_{j\in G_{l}}\Omega(p^{M_{1}}|i-j|_{p}%
)U(j)\right)  \Omega(p^{L}|x-i|_{p}).\nonumber
\end{gather}
We now replace (\ref{Definition_A})-(\ref{Definition_C}) in the equation in
(\ref{Cauchy_1}) and use that $\{\Omega(p^{L}|x-i|)\}_{i\in G_{L}}$ is a basis
of $\mathcal{D}_{L}(\mathbb{Z}_{p})$, to get \ a discretization of
(\ref{Cauchy_1}):%
\begin{equation}
\left\{
\begin{array}
[c]{ll}%
\frac{dX(i,t)}{dt}=-X(i,t)+aY(i,t)+p^{-L}\left(  \mathcal{L}U\right)
(i)+Z(i), & i\in G_{L}\\
X(i,0)=X_{0}(i), &
\end{array}
\right.  \label{CNN_1_discrete}%
\end{equation}
where
\[
Y(i,t)=f\left(  X(i,t)\right)  ,\text{ }i\in G_{L},
\]
and
\begin{equation}
\left(  \mathcal{L}U\right)  (i):=p^{M_{2}-M_{1}}\sum_{j\in G_{l}}%
\Omega(p^{M_{2}}|i-j|_{p})U(j)-\sum_{j\in G_{l}}\Omega(p^{M_{1}}%
|i-j|_{p})U(j),\text{ }i\in G_{L}. \label{Discrete Laplacian}%
\end{equation}

\subsubsection{Graph Laplacians}

Let $G=(V,E)$ be a simple graph with vertices $V$ and edges $E$. Let
$\phi:V\rightarrow\mathbb{R}$ be a function on the graph. The graph Laplacian
$\Delta$ acting $\phi$ is defined as
\[
\left(  \Delta\phi\right)  \left(  v\right)  =%
{\displaystyle\sum\limits_{\substack{w\in V\\dist(w,v)=1}}}
\left[  \phi\left(  v\right)  -\phi\left(  w\right)  \right]  ,
\]
where $dist(w,v)$ is the distance on the graph. Now, let $N\left(  v\right)  $
be a fixed neighborhood of $v$, for instance,%
\[
\mathcal{N}\left(  v\right)  =\left\{  w\in V;dist(w,v)\leq M\right\}  ,
\]
for positive integer $M$, a generalization of operator $\Delta$ is
\begin{equation}
\left(  \Delta_{\mathcal{N}}\phi\right)  \left(  v\right)  =%
{\displaystyle\sum\limits_{w\in\mathcal{N}\left(  v\right)  }}
\left(  \Delta\phi\right)  \left(  w\right)  . \label{A_form}%
\end{equation}
The operator $\mathcal{L}$ has the form (\ref{A_form}). Indeed, the following
formula holds for the operator $\left(  \mathcal{L}U\right)  (i)$:%
\begin{equation}
\left(  \mathcal{L}U\right)  (i)=\sum_{\substack{j\in G_{L}\\|i-j|_{p}\leq
p^{-M_{2}}}}\left[  \sum_{\substack{k\in G_{L}\\p^{-M_{2}}<|j-k|_{p}\leq
p^{-M_{1}}}}\left[  U(j)-U(k)\right]  \right]  \text{, }i\in G_{L}\text{.}
\label{Formula_Graph_Laplacian}%
\end{equation}
In particular, taking $M_{1}=0$, $M_{2}=1$, one gets that%
\[
\sum_{\substack{k\in G_{L}\\p^{-1}<|j-k|_{p}\leq1}}\left[  U(j)-U(k)\right]
=\sum_{\substack{k\in G_{L}\\|j-k|_{p}=1}}\left[  U(j)-U(k)\right]  ,
\]
which is the graph Laplacian on $\ G_{L}=\mathbb{Z}_{p}/p^{L}\mathbb{Z}_{p}$
with the distance induced by $|\cdot|_{p}$.

Finally, we establish formula (\ref{Formula_Graph_Laplacian}). We use that
\[
\#\left\{  k\in G_{L};p^{-M_{2}}<|j-k|_{p}\leq p^{-M_{1}}\right\}
=p^{M_{2}-M_{1}},
\]
since $i=i_{0}+i_{1}p+\ldots+i_{L-1}p^{L-1}$. Now $\left\vert i-j\right\vert
_{p}\leq p^{-M_{2}}$ and $p^{-M_{2}}<|j-k|_{p}\leq p^{-M_{1}}$ imply that%
\[
\left\vert i-k\right\vert _{p}=\max\left\{  \left\vert i-j\right\vert
_{p},|j-k|_{p}\right\}  =|j-k|_{p}\leq p^{-M_{1}},
\]
by ultrametric property of $\left\vert \cdot\right\vert _{p}$. Then
\begin{gather*}
\sum_{\substack{j\in G_{L}\\|i-j|_{p}\leq p^{-M_{2}}}}\left[  \sum
_{\substack{k\in G_{L}\\p^{-M_{2}}<|j-k|_{p}\leq p^{-M_{1}}}}\left[
U(j)-U(k)\right]  \right]  =\\
\sum_{\substack{j\in G_{L}\\|i-j|_{p}\leq p^{-M_{2}}}}\left[  \#\left\{  k\in
G_{L};p^{-M_{2}}<|j-k|_{p}\leq p^{-M_{1}}\right\}  \right]  U(j)\\
-\sum_{\substack{j\in G_{L}\\|i-j|_{p}\leq p^{-M_{2}}}}\text{ \ }%
\sum_{\substack{k\in G_{L}\\p^{-M_{2}}<|j-k|_{p}\leq p^{-M_{1}}}}U(k)=\\
\sum_{\substack{j\in G_{L}\\|i-j|_{p}\leq p^{-M_{2}}}}p^{M_{2}-M_{1}}%
U(j)-\sum_{\substack{k\in G_{L}\\|i-k|_{p}\leq p^{-M_{1}}}}U(k)=\left(
\mathcal{L}U\right)  (i).
\end{gather*}

\section{Numerical Examples}

To construct an edge detector using (\ref{CNN_1_discrete}), it requires an
algorithm for splitting a large image into smaller sub-images. Given an image
$I$ of size $(n,m)$, a prime $p$ and an integer $K$, the algorithm divides
image $I$ into sub-images $I_{r}^{\prime}$ of size $(p^{K},p^{K})$ or less.
Then, we use another algorithm to codify sub-image $I_{r}^{\prime}$ as a test
function $Test(I_{r}^{\prime})$. These algorithms are presented in the
Appendix. We process the test function $Test(I_{r}^{\prime})=U$ using network%
\begin{equation}
\left\{
\begin{array}
[c]{ll}%
\frac{dX(i,t)}{dt}=-X(i,t)+aY(i,t)+\sum_{j=0}^{8}\left\{  U(i)-U(i+j3^{2}%
)\right\}  +z_{0}, & i\in G_{L}\\
X(i,0)=0 & \\
Y(i,t)=f(X(i,t)), &
\end{array}
\right.  \label{CNN_1_discrete_2}%
\end{equation}
with $p=3$, $L=4$, $M_{1}=2$, $M_{2}=4$, and $Z(i)=z_{0}\in\mathbb{R}$, for
$i\in G_{L}$, and rescaling $\left(  \mathcal{L}U\right)  (i)$ as
$3^{4}\left(  \mathcal{L}U\right)  (i)$, for $i\in G_{L}$, to get another test
function $Y(i,t_{0};Test(I_{r}^{\prime}))$ taking values in $\left\{
\pm1\right\}  $. Each test function $Y(i,t_{0};Test(I_{r}^{\prime}))$ is
transformed into an image $I_{r}^{edges}$, at the final step, we concatenate
all the images $I_{r}^{edges}$ to obtain a full image $I^{edges}$, which is
the output \ image. The time $t_{0}$ is chosen on a case-by-case basis so that
the edges are as sharp as possible. See Figures \ref{Figure 1}, \ref{Figure 2}.


\begin{figure}
[h]
\begin{center}
\includegraphics[width=0.5\textwidth]{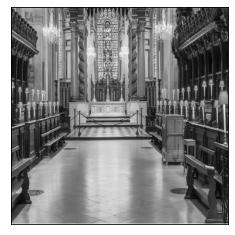}%
\includegraphics[width=0.5\textwidth]{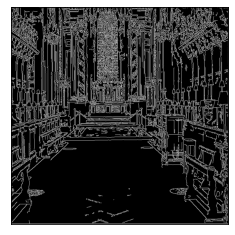}
\caption{Left side, the original image. Right side, edges obtained by using a
Canny edge detector.}%
\label{Figure 1}%
\end{center}
\end{figure}

\begin{figure}
[h]
\begin{center}
\includegraphics[width=0.5\textwidth]{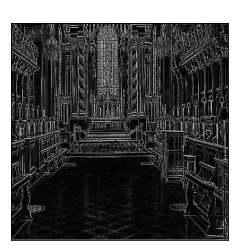}%
\includegraphics[width=0.5\textwidth]{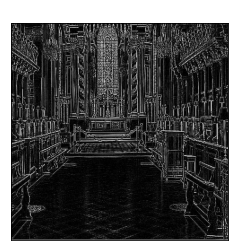}
\caption{Left side, edges obtained by using CNN (\ref{CNN_1_discrete_2}), with
$z_{0}=-1$ and $6$ steps. Right side, edges obtained by using CNN
(\ref{CNN_1_discrete_2}), with $z_{0}=-1$ and $10$ steps.}%
\label{Figure 2}%
\end{center}
\end{figure}

\section{Reaction-diffusion Cellular Neural Networks}

\subsection{The $p$-adic heat equation}

For $\alpha>0$, the Vladimirov-Taibleson operator $\boldsymbol{D}^{\alpha}$ is
defined as
\[%
\begin{array}
[c]{ccc}%
\mathcal{D}(\mathbb{Q}_{p}) & \rightarrow & L^{2}(\mathbb{Q}_{p}%
)\cap\mathcal{C}\left(  \mathbb{Q}_{p}\right) \\
&  & \\
\varphi & \rightarrow & \boldsymbol{D}^{\alpha}\varphi,
\end{array}
\]
where%
\[
\left(  \boldsymbol{D}^{\alpha}\varphi\right)  \left(  x\right)
=\frac{1-p^{\alpha}}{1-p^{-\alpha-1}}%
{\displaystyle\int\limits_{\mathbb{Q}_{p}}}
\frac{\left[  \varphi\left(  x-y\right)  -\varphi\left(  x\right)  \right]
}{\left\vert y\right\vert _{p}^{\alpha+1}}dy.
\]
The $p$-adic analogue of the heat equation is%
\[
\frac{\partial u\left(  x,t\right)  }{\partial t}+a\boldsymbol{D}^{\alpha
}u\left(  x,t\right)  =0\text{, with }a>0\text{.}%
\]
The solution of the Cauchy problem attached to the heat equation with initial
datum $u\left(  x,0\right)  =\varphi\left(  x\right)  \in\mathcal{D}%
(\mathbb{Q}_{p})$ is given by%
\[
u\left(  x,t\right)  =%
{\displaystyle\int\limits_{\mathbb{Q}_{p}}}
Z\left(  x-y,t\right)  \varphi\left(  y\right)  dy,
\]
where $Z\left(  x,t\right)  $ is the $p$\textit{-adic heat kernel} defined as
\begin{equation}
Z\left(  x,t\right)  =%
{\displaystyle\int\limits_{\mathbb{Q}_{p}}}
\chi_{p}\left(  -x\xi\right)  e^{-at\left\vert \xi\right\vert _{p}^{\alpha}%
}d\xi, \label{het-kernel}%
\end{equation}
where $\chi_{p}\left(  -x\xi\right)  $\ is the standard additive character of
the group $\left(  \mathbb{Q}_{p},+\right)  $. The $p$-adic heat kernel is the
transition density function of a Markov stochastic process with space state
$\mathbb{Q}_{p}$, see, e.g., \cite{Kochubei}, \cite{Zuniga-LNM-2016}.

\subsection{The $p$-adic heat equation on the unit ball}

We define the operator $\boldsymbol{D}_{0}^{\alpha}$, $\alpha>0$, by
restricting $\boldsymbol{D}^{\alpha}$ to $\mathcal{D}(\mathbb{Z}_{p})$ and
considering $\left(  \boldsymbol{D}^{\alpha}\varphi\right)  \left(  x\right)
$ only for $x\in\mathbb{Z}_{p}$. The operator $\boldsymbol{D}_{0}^{\alpha}%
$\ satisfies
\[
\boldsymbol{D}_{0}^{\alpha}\varphi(x)=\lambda\varphi(x)+\frac{1-p^{\alpha}%
}{1-p^{-\alpha-1}}%
{\displaystyle\int\limits_{\mathbb{Z}_{p}}}
\frac{\varphi(x-y)-\varphi(x)}{|y|_{p}^{\alpha+1}}dy,
\]
for $\mathbb{\varphi\in}\mathcal{D}(\mathbb{Z}_{p})$, with $\lambda=\frac
{p-1}{p^{\alpha+1}-1}p^{\alpha}$.

Consider the Cauchy problem%
\[
\left\{
\begin{array}
[c]{lll}%
\frac{\partial u\left(  x,t\right)  }{\partial t}+\boldsymbol{D}_{0}^{\alpha
}u\left(  x,t\right)  -\lambda u\left(  x,t\right)  =0\text{, } &
x\in\mathbb{Z}_{p}, & t>0;\\
&  & \\
u\left(  x,0\right)  =\varphi\left(  x\right)  , & x\in\mathbb{Z}_{p}, &
\end{array}
\right.
\]
where $\mathbb{\varphi\in}\mathcal{D}(\mathbb{Z}_{p})$. The solution of this
problem is given by%
\[
u\left(  x,t\right)  =%
{\displaystyle\int\limits_{\mathbb{Z}_{p}}}
Z_{0}(x-y,t)\varphi\left(  y\right)  dy\text{, }x\in\mathbb{Z}_{p}\text{,
}t>0,
\]
where
\begin{align*}
Z_{0}(x,t)  &  :=e^{\lambda t}Z(x,t)+c(t)\text{, }x\in\mathbb{Z}_{p}\text{,
}\\
c(t)  &  :=1-(1-p^{-1})e^{\lambda t}\sum_{n=0}^{\infty}\frac{(-1)^{n}}%
{n!}t^{n}\frac{1}{1-p^{-n\alpha-1}}%
\end{align*}
and $Z(x,t)$ is given (\ref{het-kernel}). The function $Z_{0}(x,t)$ is
non-negative for $x\in\mathbb{Z}_{p}$, $t>0$, and
\[%
{\displaystyle\int\limits_{\mathbb{Z}_{p}}}
Z_{0}(x,t)dx=1,
\]
\cite{Kochubei}. Furthermore, $Z_{0}(x,t)$ is the transition density function
of a Markov process with space state $\mathbb{Z}_{p}$.

The family
\begin{equation}%
\begin{array}
[c]{cccc}%
T_{t}: & L^{1}(\mathbb{Z}_{p}) & \rightarrow & L^{1}(\mathbb{Z}_{p})\\
&  &  & \\
& \phi(x) & \rightarrow & T_{t}\phi(x):=%
{\displaystyle\int\limits_{\mathbb{Z}_{p}}}
Z_{0}(x-y,t)\phi(y)dy,
\end{array}
\label{Khrennikov-Kochubei}%
\end{equation}
is a $C^{0}$-semigroup of contractions with generator $\boldsymbol{D}%
_{0}^{\alpha}-\lambda I$ on $L^{1}(\mathbb{Z}_{p})$, see \cite[Proposition 4,
Proposition 5]{Khrennikov-Kochubei}

\subsection{Reaction-diffusion CNNs}

\begin{definition}
Given $\mu\in\mathbb{R}$, $\alpha>0$, $A$, $B$,$U$, $Z\in\mathcal{C}%
(\mathbb{Z}_{p})$, a $p$-adic reaction-diffusion CNN, denoted as $CNN\left(
\mu,\alpha,A,B,U,Z\right)  $, is the dynamical system given by the following
integro-differential equation:%
\begin{align}
\frac{\partial X(x,t)}{\partial t}  &  =\mu X(x,t)+(\lambda I-\boldsymbol{D}%
_{0}^{\alpha})X(x,t)+%
{\displaystyle\int\limits_{\mathbb{Z}_{p}}}
A(x-y)f(X(y,t))dy\label{RD-CNN}\\
&  +%
{\displaystyle\int\limits_{\mathbb{Z}_{p}}}
B(x-y)U(y)dy+Z(x),\nonumber
\end{align}
where $x\in\mathbb{Z}_{p}$, $t\geq0$. We say that $X(x,t)\in$ $\mathbb{R}$ is
the state of cell $x$ at the time $t$. Function $A$ is the kernel of the
feedback operator, while function $B$ is the kernel of the feedforward
operator. Function $U$ is the input of the CNN, while function $Z$ is the
threshold of the CNN.
\end{definition}

Notice that if $\mu=0$ and $A=B=U=Z=0$, (\ref{RD-CNN}) becomes the $p$-adic
heat equation in the unit ball. Then, in (\ref{RD-CNN}), $(\lambda
I-\boldsymbol{D}_{0}^{\alpha})$ is\ the diffusion term, while the other terms
are the reaction ones, which describe the interaction between $X(x,t)$,
$U(x)$, and $Z(x)$.

\begin{remark}
In this section, we assume that $f$ is an arbitrary Lipschitz function,
$f(0)=0$, i.e., $\left\vert f(s)-f(t)\right\vert \leq L(f)\left\vert
s-t\right\vert $, for $s$, $t\in\mathbb{R}$, where $L(f)$ is a positive constant.
\end{remark}

\begin{lemma}
\label{lemma: lipschiptz condition}Let $A$, $B$, $U$, $Z\in\mathcal{C}%
(\mathbb{Z}_{p}).$

\noindent(i) Set
\begin{equation}
\boldsymbol{H}(g):=%
{\displaystyle\int\limits_{\mathbb{Z}_{p}}}
A(x-y)f\left(  g(y)\right)  dy+%
{\displaystyle\int\limits_{\mathbb{Z}_{p}}}
B(x-y)U(y)dy+Z(x), \label{Definition-Op_H}%
\end{equation}
for $g\in L^{1}(\mathbb{Z}_{p})$. Then $\boldsymbol{H}:L^{1}(\mathbb{Z}%
_{p})\rightarrow L^{1}(\mathbb{Z}_{p})$ is a well-defined \ operator
satisfying%
\[
\Vert\boldsymbol{H}(g)-\boldsymbol{H}(g^{\prime})\Vert_{1}\leq L(f)\Vert
A\Vert_{\infty}\Vert g-g^{\prime}\Vert_{1}\text{, for }g\text{, }g^{\prime}\in
L^{1}(\mathbb{Z}_{p})\text{.}%
\]

\noindent(ii) The restriction of $\boldsymbol{H}$\ to $\mathcal{C}%
(\mathbb{Z}_{p})$ satisfies
\[
\Vert\boldsymbol{H}(g)-\boldsymbol{H}(g^{\prime})\Vert_{\infty}\leq L(f)\Vert
A\Vert_{1}\Vert g-g^{\prime}\Vert_{\infty}\text{, for }g\text{, }g^{\prime}%
\in\boldsymbol{C}(\mathbb{Z}_{p})\text{,}%
\]
so $\boldsymbol{H}:\mathcal{C}(\mathbb{Z}_{p})\rightarrow\mathcal{C}%
(\mathbb{Z}_{p})$ is well-defined operator.
\end{lemma}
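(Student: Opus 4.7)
The plan is to prove (i) and (ii) by a common template: use $f(0)=0$ together with the Lipschitz assumption to get $|f(s)-f(t)|\le L(f)|s-t|$ and in particular $|f(s)|\le L(f)|s|$; observe that the $\int B(x-y)U(y)\,dy + Z(x)$ piece of $\boldsymbol H$ is independent of the argument and so disappears when we form $\boldsymbol H(g)-\boldsymbol H(g')$; and exploit the key structural fact that $\mathbb Z_p$ is compact with total Haar measure $1$, hence $\mathcal C(\mathbb Z_p)\subset L^\rho(\mathbb Z_p)$ for every $\rho\in[1,\infty]$, which means $A$, $B$, $U$, $Z$ have all their $L^1$ and $L^\infty$ norms finite.

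For part (i), I would first show well-definedness of $\boldsymbol H$ on $L^1(\mathbb Z_p)$ by estimating each summand pointwise. For the feedback term, $|f(g(y))|\le L(f)|g(y)|$ gives
\[
\Bigl|\int_{\mathbb Z_p}A(x-y)f(g(y))\,dy\Bigr|\le L(f)\,\|A\|_\infty\,\|g\|_1,
\]
which is a uniform (hence $L^1$, since the measure is $1$) bound. The feedforward term is bounded uniformly by $\|B\|_\infty\|U\|_1$, and $Z\in\mathcal C(\mathbb Z_p)\subset L^1(\mathbb Z_p)$. Thus $\boldsymbol H(g)\in L^1(\mathbb Z_p)$. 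For the Lipschitz estimate, the $U$- and $Z$-terms cancel, so
\[
|\boldsymbol H(g)(x)-\boldsymbol H(g')(x)|\le \|A\|_\infty\int_{\mathbb Z_p}|f(g(y))-f(g'(y))|\,dy\le L(f)\,\|A\|_\infty\,\|g-g'\|_1,
\]
and integrating in $x$ over $\mathbb Z_p$ (measure $1$) yields the claimed $L^1$ bound.

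For part (ii), I would first establish that $\boldsymbol H(g)\in\mathcal C(\mathbb Z_p)$ whenever $g\in\mathcal C(\mathbb Z_p)$. The function $f\circ g$ is continuous (composition), and continuity of each convolution $x\mapsto\int A(x-y)f(g(y))\,dy$ and $x\mapsto\int B(x-y)U(y)\,dy$ follows by dominated convergence applied along any sequence $x_n\to x$, since $y\mapsto A(x_n-y)$ converges pointwise to $y\mapsto A(x-y)$ (by continuity of $A$) and is uniformly bounded by $\|A\|_\infty$; similarly for $B$. Adding the continuous function $Z$, the continuity of $\boldsymbol H(g)$ follows. For the Lipschitz bound, use $|f(g(y))-f(g'(y))|\le L(f)\|g-g'\|_\infty$ uniformly in $y$ and the translation invariance of Haar measure:
\[
|\boldsymbol H(g)(x)-\boldsymbol H(g')(x)|\le L(f)\,\|g-g'\|_\infty\int_{\mathbb Z_p}|A(x-y)|\,dy = L(f)\,\|A\|_1\,\|g-g'\|_\infty,
\]
and taking the supremum over $x$ concludes.

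No serious obstacle is anticipated; the argument is essentially Young's inequality twice, adapted to the two normalizations. The one point to be careful about is the asymmetry in the two estimates: in (i) we must pull $A$ out in $\|A\|_\infty$ because $f(g(\cdot))\in L^1$ but not a priori in $L^\infty$, whereas in (ii) the roles are swapped and we need the $L^1$ norm $\int|A(x-y)|\,dy=\|A\|_1$, where translation invariance of the Haar measure $dy$ is essential to make the bound independent of $x$.
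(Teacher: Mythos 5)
Your proposal is correct and follows essentially the same route as the paper: the $U$- and $Z$-terms cancel in the difference, and the feedback term is estimated by pulling out $\Vert A\Vert_{\infty}$ against the $L^{1}$ norm in part (i) and $\Vert A\Vert_{1}$ (via translation invariance of the Haar measure) against the sup norm in part (ii), with $f(0)=0$ and the Lipschitz bound giving well-definedness. You additionally spell out the continuity of $\boldsymbol{H}(g)$ in part (ii) by dominated convergence, a detail the paper dispatches with ``the second part is established in a similar way.''
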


\begin{proof}
Take $g,g^{\prime}\in L^{1}(\mathbb{Z}_{p})$, then
\begin{gather*}
\Vert\boldsymbol{H}(g)-\boldsymbol{H}(g^{\prime})\Vert_{1}=\Vert%
{\displaystyle\int\limits_{\mathbb{Z}_{p}}}
A(x-y)\left\{  f\left(  g(y)\right)  -f\left(  g^{\prime}(y)\right)  \right\}
dy\Vert_{1}\\
\leq%
{\displaystyle\int\limits_{\mathbb{Z}_{p}}}
\left\{
{\displaystyle\int\limits_{\mathbb{Z}_{p}}}
\left\vert A(x-y)\right\vert \text{ }\left\vert f\left(  g(y)\right)
-f\left(  g^{\prime}(y)\right)  \right\vert dy\right\}  dx\leq L(f)\Vert
A\Vert_{\infty}%
{\displaystyle\int\limits_{\mathbb{Z}_{p}}}
\left\vert g(y)-g^{\prime}(y)\right\vert dy\\
\leq L(f)\Vert A\Vert_{\infty}\Vert g-g^{\prime}\Vert_{1}.
\end{gather*}
This inequality also proves that $\boldsymbol{H}$ is well-defined. The second
part is established in a similar way.
\end{proof}

\begin{proposition}
\label{Lemma-1}Let $A$, $B$, $U$, $Z\in\mathcal{C}(\mathbb{Z}_{p})$. Take
$X_{0}\in L^{1}(\mathbb{Z}_{p})$ as the initial datum for the Cauchy problem
attached to (\ref{RD-CNN}). Then there exists $\tau=\tau\left(  X_{0}\right)
\in\left(  0,\infty\right]  $ and a unique $X(t)\in\mathcal{C}([0,\tau
],L^{1}(\mathbb{Z}_{p}))$ satisfying
\begin{equation}
\left\{
\begin{array}
[c]{l}%
X(t)=e^{\mu t}T_{t}X_{0}+\int_{0}^{t}e^{\mu(t-s)}T_{t-s}\boldsymbol{H}%
(X(s))ds\\
X(0)=X_{0}.
\end{array}
\right.  \label{Eq:mild solution}%
\end{equation}

\end{proposition}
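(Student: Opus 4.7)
The plan is to invoke the Banach fixed-point theorem on the integral equation (\ref{Eq:mild solution}), treating it as a semilinear perturbation of the linear $C^{0}$-semigroup $\{T_{t}\}$ on $L^{1}(\mathbb{Z}_{p})$. Define, for $\tau>0$ to be chosen, the map
\[
\boldsymbol{\Phi}:\mathcal{C}\bigl([0,\tau],L^{1}(\mathbb{Z}_{p})\bigr)\longrightarrow\mathcal{C}\bigl([0,\tau],L^{1}(\mathbb{Z}_{p})\bigr),\qquad
\boldsymbol{\Phi}(X)(t):=e^{\mu t}T_{t}X_{0}+\int_{0}^{t}e^{\mu(t-s)}T_{t-s}\boldsymbol{H}(X(s))\,ds,
\]
and show that, for $\tau$ small enough, $\boldsymbol{\Phi}$ is a contraction whose unique fixed point is the desired mild solution.

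First I would verify that $\boldsymbol{\Phi}$ is well-defined into $\mathcal{C}([0,\tau],L^{1}(\mathbb{Z}_{p}))$. For this I use two ingredients from the excerpt: (a) by the cited result of Khrennikov--Kochubei, $\{T_{t}\}$ is a $C^{0}$-semigroup of contractions on $L^{1}(\mathbb{Z}_{p})$, so in particular $\|T_{t}\|_{L^{1}\to L^{1}}\le 1$ and $t\mapsto T_{t}\phi$ is continuous in $L^{1}$ for every $\phi$; and (b) by Lemma \ref{lemma: lipschiptz condition}(i), $\boldsymbol{H}:L^{1}(\mathbb{Z}_{p})\to L^{1}(\mathbb{Z}_{p})$ is Lipschitz with constant $L(f)\|A\|_{\infty}$. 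Continuity of $X:[0,\tau]\to L^{1}(\mathbb{Z}_{p})$ then gives continuity of $s\mapsto\boldsymbol{H}(X(s))$; the usual decomposition of $\boldsymbol{\Phi}(X)(t)-\boldsymbol{\Phi}(X)(t')$ into a ``semigroup shift'' piece and an ``integral tail'' piece, combined with strong continuity of $T_{t}$ and uniform boundedness of $s\mapsto\boldsymbol{H}(X(s))$ on $[0,\tau]$, yields continuity in $t$.

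Next I would estimate the contraction constant. Using $\|T_{t-s}\|\le 1$ on $L^{1}$ together with the Lipschitz bound of Lemma \ref{lemma: lipschiptz condition}(i), for any $X,Y\in\mathcal{C}([0,\tau],L^{1}(\mathbb{Z}_{p}))$ and $t\in[0,\tau]$,
\[
\|\boldsymbol{\Phi}(X)(t)-\boldsymbol{\Phi}(Y)(t)\|_{1}\le\int_{0}^{t}e^{\mu(t-s)}L(f)\|A\|_{\infty}\|X(s)-Y(s)\|_{1}\,ds\le \tau\,e^{|\mu|\tau}L(f)\|A\|_{\infty}\,\sup_{s\in[0,\tau]}\|X(s)-Y(s)\|_{1}.
\]
Choosing $\tau=\tau(X_{0})>0$ so small that $\tau e^{|\mu|\tau}L(f)\|A\|_{\infty}<1$ makes $\boldsymbol{\Phi}$ a strict contraction on the Banach space $\mathcal{C}([0,\tau],L^{1}(\mathbb{Z}_{p}))$ endowed with the sup-in-time norm, and the Banach fixed-point theorem produces a unique $X\in\mathcal{C}([0,\tau],L^{1}(\mathbb{Z}_{p}))$ with $\boldsymbol{\Phi}(X)=X$ and $X(0)=e^{0}T_{0}X_{0}=X_{0}$. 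The solution then extends to a maximal interval $[0,\tau)$ by the standard continuation argument: iterate the local construction starting from $X(\tau')$, and let $\tau\in(0,\infty]$ be the supremum of times of existence.

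The only real subtlety, and the part I expect to require most care, is the continuity statement $\boldsymbol{\Phi}(X)\in\mathcal{C}([0,\tau],L^{1})$: strong continuity of $T_{t}$ is pointwise in the argument, so continuity of $t\mapsto T_{t-s}\boldsymbol{H}(X(s))$ jointly in $(t,s)$ must be handled via a dominated-convergence argument, using that $\|T_{t-s}\boldsymbol{H}(X(s))\|_{1}\le\|\boldsymbol{H}(X(s))\|_{1}$ is uniformly bounded on $[0,\tau]$ (by the Lipschitz bound together with $\boldsymbol{H}(0)\in L^{1}$, which follows from $A,B,U,Z\in\mathcal{C}(\mathbb{Z}_{p})\subset L^{1}(\mathbb{Z}_{p})$ and $f(0)=0$). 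Everything else is routine from the contraction-mapping toolkit, and the contraction radius $\tau$ depends only on $\mu$, $L(f)$, $\|A\|_{\infty}$ (not on $X_{0}$), which already yields the nonempty existence interval claimed in the statement.
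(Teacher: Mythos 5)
Your proposal is correct and is essentially the paper's argument unpacked: the paper simply observes that $\{e^{\mu t}T_{t}\}$ is a strongly continuous semigroup and $\boldsymbol{H}$ is Lipschitz (Lemma \ref{lemma: lipschiptz condition}-(i)), and then cites the standard existence theorem for mild solutions of semilinear equations, whose proof is exactly the Banach fixed-point/contraction argument you carry out explicitly. Your closing remark that the global Lipschitz bound makes $\tau$ independent of $X_{0}$ (hence the solution global) is a valid observation consistent with the statement's $\tau\in(0,\infty]$.
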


\begin{proof}
By \cite[Proposition 4]{Khrennikov-Kochubei}, $(\boldsymbol{D}_{0}^{\alpha
}-\lambda I)$ is the generator of a strongly continuous semigroup $\left\{
T_{t}\right\}  _{t\geq0}$ of contraction on $L^{1}(\mathbb{Z}_{p})$. Then
$(\boldsymbol{D}_{0}^{\alpha}-\lambda I)+\mu I$ is the generator of a strongly
continuous semigroup $\left\{  e^{\mu t}T_{t}\right\}  _{t\geq0}$ on
$L^{1}(\mathbb{Z}_{p})$, see \cite[Theorem 4.3-(10)]{Milan}. Since $\left\Vert
e^{\mu t}T_{t}\right\Vert \leq e^{\mu t}$ and $\boldsymbol{H}$ is a Lipschitz
nonlinearity, see Lemma \ref{lemma: lipschiptz condition}-(i), there exits a
unique \ mild solution $X(t)\in\mathcal{C}([0,\tau],L^{1}(\mathbb{Z}_{p}%
))$\ satisfying (\ref{Eq:mild solution}), see, e.g., \cite[Theorem
5.1.2]{Milan}.
\end{proof}

\begin{lemma}
\label{Lemma-2}Let $A$, $B$, $U$, $Z\in\mathcal{C}(\mathbb{Z}_{p})$. Take
$X_{0}\in\mathcal{C}(\mathbb{Z}_{p})$. Then, the integral equation
(\ref{Eq:mild solution}) has unique solution $\mathcal{C}([0,\infty
),\mathcal{C}(\mathbb{Z}_{p}))$.
\end{lemma}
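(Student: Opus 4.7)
The plan is to solve (\ref{Eq:mild solution}) by a Banach fixed-point argument in the space $E_T:=\mathcal{C}([0,T],\mathcal{C}(\mathbb{Z}_p))$ with norm $\|X\|_{E_T}=\sup_{0\le t\le T}\|X(t)\|_\infty$, and then to glue the local solutions together to cover $[0,\infty)$. Concretely, I would introduce the Duhamel operator
\[
\Phi(X)(t):=e^{\mu t}T_tX_0+\int_0^t e^{\mu(t-s)}T_{t-s}\boldsymbol{H}(X(s))\,ds
\]
and show that for $T$ sufficiently small $\Phi$ is a strict contraction on $E_T$.

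Before running the contraction, I would verify two properties of the semigroup $\{T_t\}_{t\ge 0}$ from (\ref{Khrennikov-Kochubei}) at the level of $\mathcal{C}(\mathbb{Z}_p)$ rather than $L^1(\mathbb{Z}_p)$. First, $T_t$ maps $\mathcal{C}(\mathbb{Z}_p)$ into itself with $\|T_t\phi\|_\infty\le\|\phi\|_\infty$; this is immediate from the non-negativity of $Z_0$ and the identity $\int_{\mathbb{Z}_p}Z_0(x-y,t)\,dy=1$ recorded before (\ref{Khrennikov-Kochubei}), while continuity of $T_t\phi$ in $x$ follows from the continuity of $Z_0(\cdot,t)=e^{\lambda t}Z(\cdot,t)+c(t)$ and dominated convergence, using compactness of $\mathbb{Z}_p$. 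Second, $t\mapsto T_t\phi$ is continuous from $[0,\infty)$ into $\mathcal{C}(\mathbb{Z}_p)$; the only delicate point is strong continuity at $t=0$, which is the Feller property of the Markov process attached to $Z_0$ discussed just before (\ref{Khrennikov-Kochubei}). Combined with Lemma \ref{lemma: lipschiptz condition}(ii), which provides $\|\boldsymbol{H}(g)-\boldsymbol{H}(g')\|_\infty\le L(f)\|A\|_1\|g-g'\|_\infty$, these two properties make $\Phi:E_T\to E_T$ well-defined for every $T>0$.

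Next I would estimate, for $X,Y\in E_T$,
\[
\|\Phi(X)(t)-\Phi(Y)(t)\|_\infty\le L(f)\|A\|_1\int_0^t e^{\mu(t-s)}\,ds\ \|X-Y\|_{E_T}\le T\,e^{|\mu|T}L(f)\|A\|_1\,\|X-Y\|_{E_T},
\]
so picking $T_\ast>0$ with $T_\ast e^{|\mu|T_\ast}L(f)\|A\|_1<1$ gives a unique fixed point of $\Phi$ in $E_{T_\ast}$, hence a unique solution of (\ref{Eq:mild solution}) on $[0,T_\ast]$. Because $T_\ast$ depends only on $\mu$, $L(f)$ and $\|A\|_1$, the construction can be restarted at time $T_\ast$ with new initial datum $X(T_\ast)\in\mathcal{C}(\mathbb{Z}_p)$ and iterated, covering every bounded interval in finitely many steps and thereby producing a solution in $\mathcal{C}([0,\infty),\mathcal{C}(\mathbb{Z}_p))$. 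Global uniqueness is automatic from uniqueness on each subinterval, and is also consistent with Proposition \ref{Lemma-1}, since a continuous solution is in particular an $L^1$ mild solution.

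The main obstacle is the passage from $L^1(\mathbb{Z}_p)$ to $\mathcal{C}(\mathbb{Z}_p)$: the semigroup $\{T_t\}$ is introduced in (\ref{Khrennikov-Kochubei}) only as a $C^0$-semigroup on $L^1$, so one has to upgrade its action to a strongly continuous Feller semigroup on $\mathcal{C}(\mathbb{Z}_p)$ using the explicit form of $Z_0$ and the Markov-process interpretation; once that is established, the remainder of the proof is a routine Picard iteration with a globally Lipschitz nonlinearity.
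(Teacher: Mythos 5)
Your proposal is correct and follows essentially the same route as the paper: both set up the Duhamel operator on $\mathcal{C}([0,T],\mathcal{C}(\mathbb{Z}_p))$ and run a Banach fixed-point argument using the Lipschitz bound of Lemma \ref{lemma: lipschiptz condition}(ii) together with the contraction property of $T_t$ in the sup norm. The only difference is the device for globality --- the paper shows that the $n$-th iterate $\boldsymbol{G}^{n}$ of the Duhamel operator is a contraction on an arbitrary fixed interval $[0,T]$ via the factorial estimate, whereas you contract on a short interval and continue; your explicit verification that $T_t$ acts as a strongly continuous contraction semigroup on $\mathcal{C}(\mathbb{Z}_p)$ fills in a step the paper only asserts.
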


\begin{proof}
It is sufficient to show that (\ref{Eq:mild solution}) has a unique solution
in $\mathcal{C}(\left[  0,T\right]  ,\mathcal{C}(\mathbb{Z}_{p}))$, where
$T>0$ is an arbitrary time horizon. Indeed, if $X_{0}(t)\in\mathcal{C}(\left[
0,T_{0}\right]  ,\mathcal{C}(\mathbb{Z}_{p}))$ and $X_{1}(t)\in\mathcal{C}%
(\left[  0,T_{1}\right]  ,\mathcal{C}(\mathbb{Z}_{p}))$, with $T_{0}\leq
T_{1}$, are mild solutions, then $X_{0}(t)=X_{1}(t)$ for $t\in\left[
0,T_{1}\right]  $, see \cite[Theorem 5.2.3]{Milan}.

We set $\mathcal{Y}:=\mathcal{C}([0,T],\mathcal{C}(\mathbb{Z}_{p}))$, which is
a Banach space with \ norm
\[
\sup_{0\leq t\leq T}\left\Vert Y\left(  t\right)  \right\Vert _{\infty}%
=\sup_{0\leq t\leq T}\left[  \sup_{x\in\mathbb{Z}_{p}}\left\vert
Y(x,t)\right\vert \right]  .
\]
We now set
\[
\boldsymbol{G}g(t):=e^{\mu t}T_{t}X_{0}+\int_{0}^{t}e^{\mu(t-s)}%
T_{t-s}\boldsymbol{H}\left(  g(s)\right)  ds,
\]
for $g(t)\in\mathcal{C}([0,T],\mathcal{C}(\mathbb{Z}_{p}))$. By using that
$Z_{0}(x,t)\in L^{1}(\mathbb{Z}_{p})$, one gets $T_{t}g\in\mathcal{C}%
([0,T],\mathcal{C}(\mathbb{Z}_{p}))$, and by Lemma
\ref{lemma: lipschiptz condition}-(ii), $\boldsymbol{G}:\mathcal{Y}%
\rightarrow\mathcal{Y}$.\ We now set
\[
\boldsymbol{G}^{n}=\underbrace{\underset{n-times}{\boldsymbol{G}%
\circ\boldsymbol{G\circ\cdots\circ G}}}.
\]
We show that for $n$ sufficiently large $\boldsymbol{G}^{n}$ is a contraction.
We first notice that%
\[
\left\Vert \boldsymbol{G}g(t)-\boldsymbol{G}g(t)\right\Vert _{\infty}\leq
L(f)e^{\mu T}\left\Vert A\right\Vert _{1}\Vert g(t)-g^{\prime}(t)\Vert
_{\infty}.
\]

By a well-known argument, see \cite[Proof of Theorem 5.1.2]{Milan}, one gets
that%
\[
\left\Vert \boldsymbol{G}^{n}g(t)-\boldsymbol{G}^{n}g(t)\right\Vert _{\infty
}\leq\frac{(e^{\mu T}L(f)\Vert A\Vert_{1}T)^{n}}{n!}\Vert g(t)-g^{\prime
}(t)\Vert_{\infty},
\]
with $\frac{(e^{\mu T}L(f)\Vert A\Vert_{1}T)^{n}}{n!}<1$, for $n$ sufficiently
large. Therefore $\boldsymbol{G}$ has a unique fixed point $X(t)$ in
$\mathcal{Y}$, see \cite[Theorem 1.1.3]{Milan}.
\end{proof}

\begin{theorem}
\label{Theorem3}Let $X(t)\in\mathcal{C}([0,\infty),\mathcal{C}(\mathbb{Z}%
_{p}))$ be the unique solution of (\ref{Eq:mild solution}), with initial
condition $X_{0}\in\mathcal{C}(\mathbb{Z}_{p})$.Then,
\begin{equation}
\Vert X(t)\Vert_{\infty}\leq e^{\mu t}\Vert X_{0}\Vert_{\infty}+\frac{(e^{\mu
t}-1)}{\mu}\left(  \Vert A\Vert_{1}\Vert f\Vert_{\infty}+\Vert B\Vert_{1}\Vert
U\Vert_{\infty}+\Vert Z\Vert_{\infty}\right)  , \label{Eq-1A}%
\end{equation}
if $\mu\neq0$, otherwise%
\begin{equation}
\Vert X(t)\Vert_{\infty}\leq\Vert X_{0}\Vert_{\infty}+\tau\left(  \Vert
A\Vert_{1}\Vert f\Vert_{\infty}+\Vert B\Vert_{1}\Vert U\Vert_{\infty}+\Vert
Z\Vert_{\infty}\right)  . \label{Eq-1B}%
\end{equation}

\end{theorem}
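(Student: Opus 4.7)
The plan is to start from the mild-solution representation
\[
X(t) = e^{\mu t}\, T_t X_0 \;+\; \int_0^t e^{\mu(t-s)}\, T_{t-s}\, \boldsymbol{H}(X(s))\,ds
\]
guaranteed by Proposition~\ref{Lemma-1} and Lemma~\ref{Lemma-2}, take the sup-norm in $x \in \mathbb{Z}_p$ of both sides, and then estimate each term using (i) the fact that $\{T_t\}_{t\geq 0}$ acts as a contraction on $\mathcal{C}(\mathbb{Z}_p)$, and (ii) a uniform bound on the nonlinearity $\boldsymbol{H}$.

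First I would record the key observation that $\|T_t g\|_\infty \leq \|g\|_\infty$ for every $g \in \mathcal{C}(\mathbb{Z}_p)$; this is the only property not already formulated as a lemma in the paper. It follows at once from the two facts recalled earlier, namely $Z_0(\cdot,t)\geq 0$ on $\mathbb{Z}_p$ and $\int_{\mathbb{Z}_p} Z_0(x,t)\,dx = 1$, since then
\[
|T_t g(x)| \leq \int_{\mathbb{Z}_p} Z_0(x-y,t)\,|g(y)|\,dy \leq \|g\|_\infty.
\]
In particular $\|e^{\mu t} T_t X_0\|_\infty \leq e^{\mu t}\|X_0\|_\infty$.

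Next I would bound the nonlinearity uniformly in $s$: Young's convolution inequality $\|K\ast g\|_\infty \leq \|K\|_1\|g\|_\infty$, applied to the two convolution terms in $\boldsymbol{H}$, together with the trivial estimate $|f(X(y,s))|\leq \|f\|_\infty$, gives
\[
\|\boldsymbol{H}(X(s))\|_\infty \leq \|A\|_1\|f\|_\infty + \|B\|_1\|U\|_\infty + \|Z\|_\infty =: C,
\]
so the contraction property yields $\|T_{t-s}\boldsymbol{H}(X(s))\|_\infty \leq C$. Inserting both estimates in the mild-solution identity gives
\[
\|X(t)\|_\infty \leq e^{\mu t}\|X_0\|_\infty + C\int_0^t e^{\mu(t-s)}\,ds,
\]
and evaluating the remaining integral yields $(e^{\mu t}-1)/\mu$ when $\mu\neq 0$ and $t$ when $\mu=0$, producing \eqref{Eq-1A} and \eqref{Eq-1B} respectively (reading $\tau=t$ in the latter).

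The only real obstacle is promoting the $L^1$-contractivity of $\{T_t\}$ cited from \cite{Khrennikov-Kochubei} to a sup-norm contractivity, but as explained above this is immediate from the probabilistic properties of $Z_0$ already stated in the paper. No Gronwall argument or iteration is needed, because the uniform boundedness of $f$ decouples the estimate for $\boldsymbol{H}(X(s))$ from $\|X(s)\|_\infty$.
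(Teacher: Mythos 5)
Your proposal is correct and follows the same skeleton as the paper's proof: take sup-norms in the mild-solution identity \eqref{Eq:mild solution}, use contractivity of $T_t$, bound $\boldsymbol{H}$, and integrate $e^{\mu(t-s)}$. There is, however, one substantive difference, and it works in your favor. The paper bounds the nonlinearity by invoking Lemma~\ref{lemma: lipschiptz condition}-(ii), which produces the estimate $\Vert\boldsymbol{H}(g)\Vert_{\infty}\leq L(f)\Vert A\Vert_{1}\Vert g\Vert_{\infty}+\Vert B\Vert_{1}\Vert U\Vert_{\infty}+\Vert Z\Vert_{\infty}$; this bound depends on $\Vert g\Vert_{\infty}$, so plugging it into the Duhamel term does not directly yield the constant $\Vert A\Vert_{1}\Vert f\Vert_{\infty}$ appearing in \eqref{Eq-1A}--\eqref{Eq-1B} (one would need a Gronwall-type closure, which the paper does not carry out). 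Your route --- estimating $|f(X(y,s))|\leq\Vert f\Vert_{\infty}$ pointwise and then applying Young's inequality --- decouples the estimate from $\Vert X(s)\Vert_{\infty}$ and delivers exactly the stated constant; it only requires the implicit hypothesis, already built into the statement of the theorem, that $\Vert f\Vert_{\infty}<\infty$ (which holds for the sigmoid used throughout but not for a general Lipschitz $f$). You also correctly supply the sup-norm contractivity $\Vert T_{t}g\Vert_{\infty}\leq\Vert g\Vert_{\infty}$ from the nonnegativity and unit mass of $Z_{0}$, a step the paper leaves implicit since it only cites the $L^{1}$ contraction property. Your reading of $\tau$ as $t$ in \eqref{Eq-1B} is the natural one.
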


\begin{proof}
By using that $\left\Vert B\ast U\right\Vert \leq\Vert B\Vert_{1}\Vert
U\Vert_{\infty}$, cf. \cite[Theorem 1.7]{Taibleson}, and Lemma
\ref{lemma: lipschiptz condition}-(ii), we get that%
\[
\Vert\boldsymbol{H}(g)\Vert_{\infty}\leq L(f)\Vert A\Vert_{1}\Vert
g\Vert_{\infty}+\Vert B\Vert_{1}\Vert U\Vert_{\infty}+\Vert Z\Vert_{\infty
}\text{ for }g,\in\boldsymbol{C}(\mathbb{Z}_{p}).
\]
Now, the stated formula follows from (\ref{Eq:mild solution}), by Lemma
\ref{Lemma-2}, \ by using that $\Vert A\Vert_{1}\leq\Vert A\Vert_{\infty}$ and
$\Vert B\Vert_{1}\leq\Vert B\Vert_{\infty}$. The bound (\ref{Eq-1B}) is
established in a similar way.
\end{proof}

\section{Denoising}

In this section, we present a new denoising technique based on certain
reaction-diffusion CNNs. We first consider the initial value problem%
\begin{equation}
\left\{
\begin{array}
[c]{lll}%
\frac{\partial X(x,t)}{\partial t}+\boldsymbol{D}_{0}^{1}X(x,t)-\lambda
X(x,t)=0, & x\in\mathbb{Z}_{p}, & t>0\\
&  & \\
X(x,0)=X_{0}(x), & x\in\mathbb{Z}_{p}, &
\end{array}
\right.  \label{Eq: Vladimirov}%
\end{equation}
where $X_{0}(x)\in$ $[0,1]$ is a grayscale image codified as a test function
supported in the unit ball $\mathbb{Z}_{p}$. The algorithm for this coding is
discussed at the end of this section. The output image $X(x,t)$ is similar to
the one produced by the classical Gaussian filter. See Figure \ref{Figure 3}.


\begin{figure}[h]
\begin{center}
\includegraphics[width=0.5\textwidth]{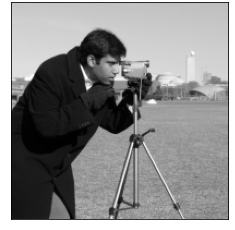}%
\includegraphics[width=0.5\textwidth]{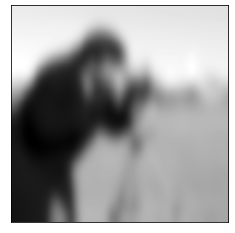}
\caption{On the left side, the original image $X(x,0)$. On the right side
$X(x,3)$.}%
\label{Figure 3}%
\end{center}
\end{figure}

In this article we propose the following reaction-diffusion CNN for denoising
grayscale images polluted with normal additive noise:{\footnotesize
\begin{equation}
\frac{\partial X(x,t)}{\partial t}=3X(x,t)+(\lambda I-\boldsymbol{D}%
_{0}^{\alpha})X(x,t)+3B\ast\left[  X_{0}(x)-f(X(x,t))\right]  , \label{CNN_5}%
\end{equation}
} where $\alpha=0.75$, $f(x)=0.5(|x+1|-|x-1|)$, $B(x)=(\Omega(p^{2}%
|x|_{p})-\Omega(|x|_{p}))$, and $-1\leq X_{0}(x)\leq1$. Notice that we are
using the interval $\left[  -1,1\right]  $ as a grayscale scale. This equation
was found experimentally. Natively, the reaction term $3X(x,t)+3B\ast\left[
X_{0}(x)-f(X(x,t))\right]  $ gives an estimation of the edges of the image,
while the diffusion term $(\lambda I-\boldsymbol{D}_{0}^{\alpha})X(x,t)$
produces a smoothed version of the image.

The processing of an image $X_{0}(x)$ using (\ref{CNN_1_discrete_2}) requires
solving the corresponding Cauchy problem with initial datum $X(x,0)=X_{0}(x)$.
Given an image $I$, i.e., a matrix of size $\left(  n,m\right)  $, and a pixel
$(i,j)$ of $I$ , for the processing of this pixel we use a neighborhood
$I_{i,j}$ centered at this pixel, which is sub-image $I_{i,j}$ of size
$\left(  p^{K},p^{K}\right)  $, where $p^{2K}$ is the number of pixels in the
sub-image $I_{i,j}$. We use small primes, $p=2,3$ to get sub-images of size
$2\times2$ and $3\times3$. The choosing of the prime $p$ is completely
determined by the image size, then, only small primes are required. Now, we
codify the sub-image $I_{i,j}$ a test function $Test(I_{i,j})$ and solve
numerically the Cauchy problem attached to (\ref{CNN_1_discrete_2}) with
initial datum $Test(I_{i,j})$. We pick a time $t_{0}$, on a case by case
basis, and take the test function $X(x,t_{0};I_{i,j})$ as the output of the
network. At the final step, we transform $X(x,t_{0};I_{i,j})$ into an image
$I_{i,j}^{\prime},$ and take the pixel processed image at $(i,j)$ as the
center of $I_{i,j}^{\prime}$. See Figures \ref{Figure 4}, \ref{Figure 5}.


\begin{figure}[h]
\begin{center}
\includegraphics[width=0.5\textwidth]{Figure_3.png}%
\includegraphics[width=0.5\textwidth]{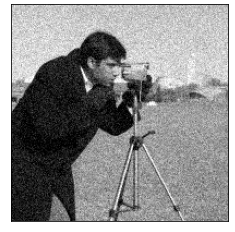}
\caption{Left side, the original image. Right side, the image plus Gaussian
noise, mean zero and variance $0.05$.}%
\label{Figure 4}%
\end{center}
\end{figure}

\begin{figure}[h]
\begin{center}
\includegraphics[width=0.5\textwidth]{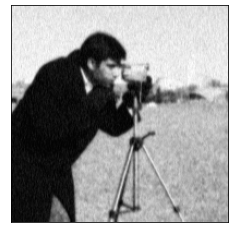}%
\includegraphics[width=0.5\textwidth]{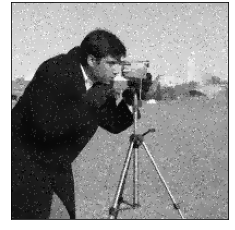}
\caption{Left side, filtered image using Equation \ref{CNN_5}. Right side,
filtered image obtained by using Perona-Malik equation with $\lambda=0.04$,
$\delta_{t}=0.075$, and \ $t=100$ iterations, and $g_{1}(s)$, see
\cite{B-L-Mathm-image}.}%
\label{Figure 5}%
\end{center}
\end{figure}

\section{Appendix: Images and test functions}

In this appendix. we show the existence of a bijective correspondence between
images and test functions. We first show the existence of a bijective
correspondence between finite disjoint unions of balls contained in
$\mathbb{Z}_{p}$, for some prime $p$ with weighted rooted trees of valence
$p$. The connections between clustering, trees and ultrametric spaces are
well-known, see e.g., \cite[Chapter 2]{KKZuniga} and the references therein.
Finally, we show the existence of a bijective correspondence between finite,
regular rooted trees of valence $p$ with images.

\subsection{Finite rooted trees and test functions}

By a \textit{finite rooted tree} $\mathcal{T}$, we mean a finite undirected
graph in which any two vertices are connected by \ exactly one path. The
vertices $V(\mathcal{T})$ of $\mathcal{T}$ are organized in disjoint
\textit{levels}:%
\[
V(\mathcal{T})=%
{\textstyle\bigsqcup\limits_{j=0}^{M}}
Level_{j}\left(  \mathcal{T}\right)  ,
\]
where $Level_{j}\left(  \mathcal{T}\right)  :=Level_{j}=\left\{
v_{j,0},v_{j,1},\ldots,v_{j,k_{j}}\right\}  $, $k_{j}\geq1$, are the vertices
of $\mathcal{T}$ at level $j$. At level $0$ there is exactly one vertex
$v_{0}$, \textit{the root of the tree}. The vertices at the level $1$ are the
\textit{descendants} of the root, which means that there is path $v_{0}$
$\rightarrow v_{1,i}$ for any vertex $v_{1,i}\in Level_{1}$. Inductively, the
vertices at level $j$, $1\leq j\leq M$, are the the descendants of the
vertices at level $j-1$. The vertices at level $M$ do not have descendants.

We denote by $\gamma\left(  v\right)  $, $v\in V(\mathcal{T})$, the number of
edges emanating from $v$. We set
\[
\gamma_{\mathcal{T}}:=\max_{v\in V(\mathcal{T})}\left\{  \gamma\left(
v\right)  \right\}  .
\]
We fix a prime number defined as $p_{\mathcal{T}}:=\min_{p}\left\{
\gamma_{\mathcal{T}}\leq p;\text{ }p\text{ prime}\right\}  $. For the sake of
simplicity we use $p:=p_{\mathcal{T}}$. Given any vertex $v_{j,i_{j}}\in
Level_{j}$, $1\leq j\leq M$, there is exactly one path connecting $v_{j,i_{j}%
}$ with $v_{0}$:%
\begin{equation}
v_{0}\rightarrow v_{1,i_{1}}\rightarrow\ldots\rightarrow v_{j-1,i_{j-1}%
}\rightarrow v_{j,i_{j}}. \label{Path}%
\end{equation}
We attach to $v_{j,i_{j}}$ the $p$-adic integer%
\begin{equation}
I_{v_{j,i_{j}}}:=i_{1}+i_{2}p+\ldots+i_{j-1}p^{j-2}+i_{j}p^{j-1}\text{,}
\label{P-adic_Number}%
\end{equation}
where the digits $i_{k}$ belong to $\left\{  0,1,\ldots,p-1\right\}  $. Then,
there is a bijection between the vertices of $\mathcal{T}$ and the $p$-adic
integers of form (\ref{P-adic_Number}). Given a vertex $v$ at level $L_{v}$
denote the corresponding $p$-adic number as
\begin{equation}
I_{v}=i_{1}+i_{2}p+\ldots+i_{L_{v}-1}p^{L_{v}-1}\text{, \ }L_{v}\leq M.
\label{dato_1}%
\end{equation}

Now we attach to $\mathcal{T}$ the following family of balls:%
\[
B(\mathcal{T}):=\left\{  I_{v}+p^{L_{v}}\mathbb{Z}_{p}\text{, }v\in
V(\mathcal{T})\smallsetminus\left\{  v_{0}\right\}  \right\}
{\textstyle\bigsqcup}
\text{ }\left\{  \mathbb{Z}_{p}\right\}  \text{,}%
\]
where the unit ball $\mathbb{Z}_{p}$ correspond to the case $v=v_{0}$. The
tree $\mathcal{T}$ and the collection of balls $B(\mathcal{T})$ are equivalent
data. Indeed, given a finite collection $B$ of balls contained in
$\mathbb{Z}_{p}$ such that $\mathbb{Z}_{p}\in B$, there is a finite rooted
tree $\mathcal{T}$ that represents the partial order induced by $\subseteq$ in
$B$.

We say that a vertex $v$ is a leaf of $\mathcal{T}$ if $v$ does not have
descendants. In particular all the vertices in $Level_{M}$ are leaves. We
denote by $Leaf(\mathcal{T})$ the set of all leaves of $\mathcal{T}$. Finally,
we attach to $\mathcal{T}$ the open compact subset%
\begin{equation}
\mathcal{K}(\mathcal{T})=\bigsqcup\limits_{v\in Leaf(\mathcal{T})}\text{
}\left(  I_{v}+p^{L_{v}}\mathbb{Z}_{p}\right)  . \label{dato_2}%
\end{equation}
Now, given a finite disjoint union of balls of the form $%
{\textstyle\bigsqcup\nolimits_{v\in\mathcal{G}}}
\left(  I_{v}+p^{L_{v}}\mathbb{Z}_{p}\right)  $, there is a unique tree
$\mathcal{T}$ \ having $\left\{  I_{v};v\in\mathcal{G}\right\}  $ as\ a set of
leaves. The other vertices correspond to truncations of the numbers $I_{v}$s.
And given a tree $\mathcal{T}$, (\ref{dato_2}) attaches a unique finite
disjoint union of balls to $\mathcal{T}$.

We define a \textit{weighted tree} as a pair $\left(  \mathcal{T},w\right)  $,
where $w:$ $Leaf(\mathcal{T})\rightarrow\mathbb{R}_{+}:=\left\{
x\in\mathbb{R};x\geq0\right\}  $. \ We denote by $\Omega\left(  p^{L_{v}%
}\left\vert x-I_{v}\right\vert _{p}\right)  $ the characteristic function of
the ball $\left(  I_{v}+p^{L_{v}}\mathbb{Z}_{p}\right)  $. Given a test
function from $\mathcal{D}_{L_{v}}(\mathbb{Z}_{p})$ of the form
\begin{equation}
\Phi\left(  x\right)  =%
{\textstyle\sum\limits_{v\in\mathcal{G}}}
c_{v}\Omega\left(  p^{L_{v}}\left\vert x-I_{v}\right\vert _{p}\right)  \text{,
\ }x\in\mathbb{Z}_{p}, \label{dato_3}%
\end{equation}
we attach to it the unique weighted tree with leaves $\left\{  I_{v}%
;v\in\mathcal{G}\right\}  $ and weights $v\rightarrow c_{v}$, for
$v\in\mathcal{G}$. Conversely, given a weighted tree $\left(  \mathcal{T}%
,w\right)  $, with leaves $\mathcal{G}=\left\{  I_{v};v\in Leaf(\mathcal{T}%
)\right\}  $, and $w(v)=c_{v}$ for $v\in\mathcal{G}$, (\ref{dato_3}) defines a
unique test function $\Phi\left(  x\right)  $ from $\mathcal{D}_{L_{v}%
}(\mathbb{Z}_{p})$.

\subsection{Images and finite rooted trees}

In the numerical simulations, we use an algorithm for coding an image as a
finite, weighted, regular, rooted tree of valence $p$, where $p$ is a prime
number. The input is an image $I$, a $(n,m)$ matrix, and a prime number $p$
satisfying $p\leq m,n$. The output is a finite, weighted, regular tree
$Tree(I)$. We use two functions: the function $d_{H}$ divides an image into
$p$ horizontal sub-images, and the function $d_{V}$ divides an image into $p$
vertical sub-images. The tree has at most $L:=\lfloor\log_{p}(nm)\rfloor$
\ levels. The level zero contains just the root of the tree. Each vertex of
the tree corresponds to a sub-image $I^{\prime}$ of $I$, and the descendants
of this vertex, in the next level, are sub-images of $I^{\prime}$ obtained by
using the function $d_{H}$ or $d_{V}$.

The tree $Tree(I)$ corresponding to an image $I$\ is construct recursively as follows:

\begin{enumerate}
\item Level $0$: there is one vertex, the root of the tree which corresponds
to $I$.

\item Level $2l+1$: the descendants of a vertex $I^{\prime}$ at the level $2l$
correspond to the elements of $d_{H}(I^{\prime})$.

\item Level $2l$: the descendants of a vertex $I^{\prime}$ at the level $2l-1$
correspond to the elements of $d_{V}(I^{\prime})$.

\item Level $L$: \ all the vertices (leaves) at the level $L$\ are pixels. The
grayscale intensity of each pixel gives a the weight of the corresponding leaf.
\end{enumerate}

We now define the operator $d_{V}$. Let \ $m_{0},r_{0}$ nonnegative integers
such that $m=pm_{0}+r_{0}$. If $m_{0}\neq0$, we define
\begin{align*}
I_{s}  &  =\left[  I_{i,j}\right]  _{\substack{0\leq i<n\\sm_{0}\leq j\leq
m_{0}(s+1)}}\text{ for }s=0,\ldots,r_{0},\\
I_{s}  &  =\left[  I_{i,j}\right]  _{\substack{0\leq i<n\\m_{0}s+r_{0}\leq
j<m_{0}s}}\text{ for }s=r_{0}+1,\ldots,p-1.
\end{align*}
and
\[
d_{V}(I)=[I_{s}]_{s=0,\ldots,p-1}.
\]
If $m_{0}=0$, we define
\[
I_{s}=\left[  I_{i,j}\right]  _{\substack{0\leq i<n\\j=s}}\text{ for
}s=0,\ldots,r_{0}\text{, and }d_{V}(I)=[I_{s}]_{s=0,\ldots,p-1}.
\]
Thus the operator $d_{V}$ divides the image $I$ into $p$ vertical sub-images.

We now define operators $d_{H}$. Let $n_{0},q_{0}$ be non-negative integers
satisfying $n=(p-1)n_{0}+q_{0}$. If $n_{0}\neq0$. \ We define
\begin{align*}
I_{s}  &  =\left[  I_{i,j}\right]  _{\substack{sn_{0}\leq i\leq n_{0}%
(s+1)\\0\leq j<m}}\text{ for }s=0,\ldots,q_{0},\\
I_{s}  &  =\left[  I_{i,j}\right]  _{\substack{n_{0}s+q_{0}\leq i<n_{0}%
s\\0\leq j<m}}\text{ for }s=q_{0}+1,\ldots,p-1,
\end{align*}
and%
\[
d_{H}(I)=[I_{s}]_{s=0,\ldots,p-1}\text{.}%
\]
If $n_{0}=0$, we define
\[
I_{s}=(I_{i,j})_{i=s;0\leq j<m}\text{ for }s=0,\ldots,q_{0}\text{, and }%
d_{H}(I)=[I_{s}]_{s=0,\ldots,p-1}.
\]

Thus the operator $d_{H}$ divides the image $I$ into $p$ horizontal sub-images.

Consequently, the correspondence between images and weighted, finite, regular,
rooted trees of valence $p$, is a bijection. Figure \ref{Figure 6} shows the
correspondence between images and test functions.

\begin{figure}[h]
\begin{center}
\includegraphics[width=0.4\textwidth]{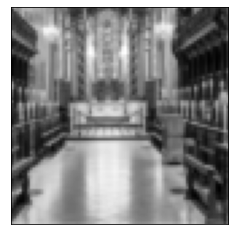}
\includegraphics[width=0.5\textwidth]{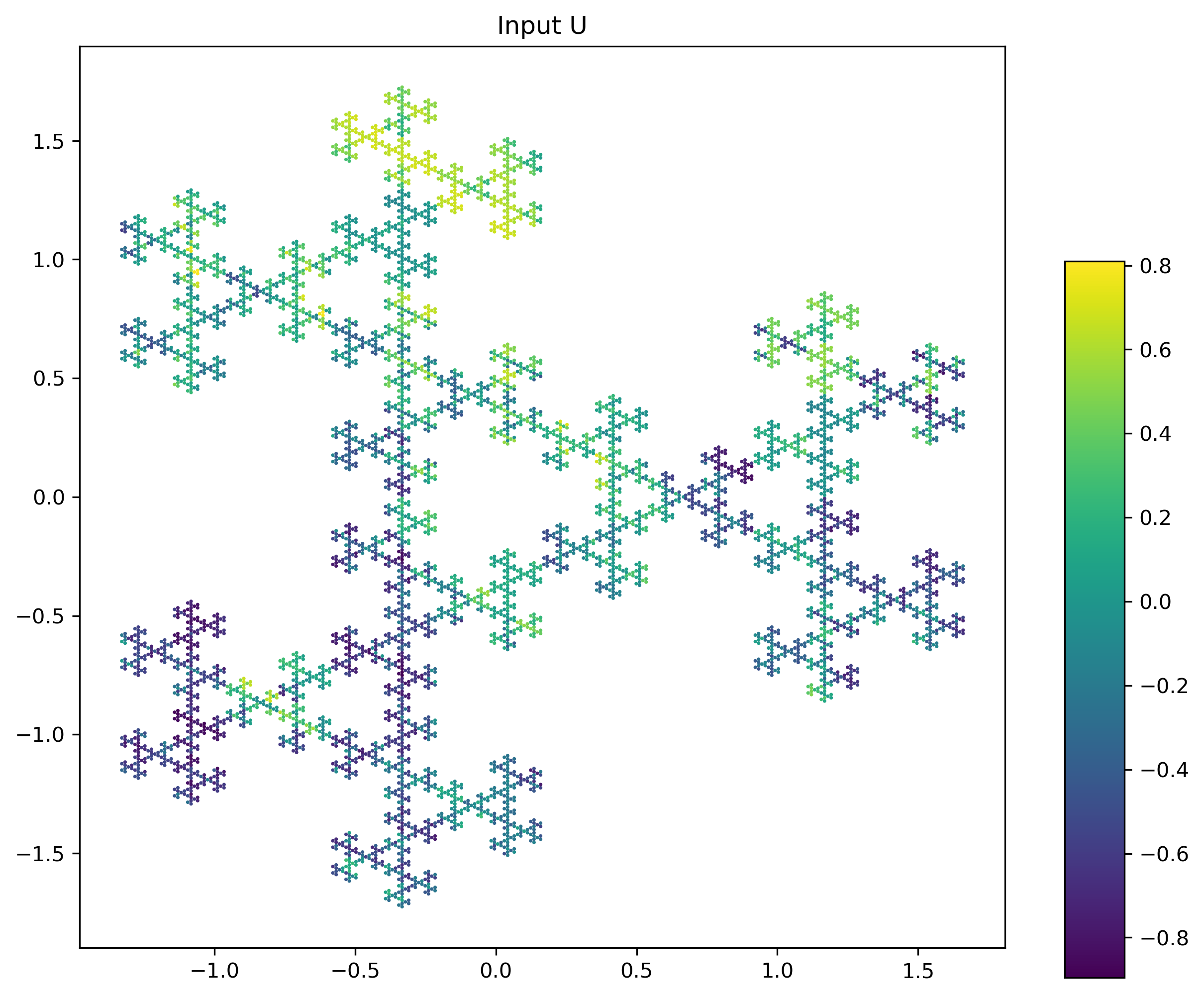}
\caption{Left side, original image $81\times81$. Right side, the
representation of the image as atest function. We use $p=3$, $L=8$.}%
\label{Figure 6}%
\end{center}
\end{figure}

\bigskip

\end{document}